\newtheorem{proposition}{Proposition}[section]
\def\keywords{\vspace{.5em}
{\noindent\textbf{Keywords}:\,\relax%
}}
\def \qmo{``}
\def \qmcsp{'' }
\def \bs{\mathbf}
\DeclareMathOperator*{\argmin}{arg\,min}
\title{Quantile mixed graphical models with an application to mass public shootings in the United States}
\author[1]{Luca Merlo}
\author[2,3]{Marco Geraci}
\author[2]{Lea Petrella}
\affil[1]{Department of Human Sciences, European University of Rome, Italy \authorcr luca.merlo@unier.it}
\affil[2]{MEMOTEF Department, Sapienza University of Rome, Italy}
\affil[3]{Department of Epidemiology and Biostatistics, University of South Carolina, USA}
\date{\today}
\begin{document}
\maketitle

\begin{abstract}
Over the last fifty years, {\color{black}the United States have} experienced hundreds of mass public shootings {\color{black}that} resulted in thousands of {\color{black}victims}. {\color{black}Characterized by their frequent occurrence} and devastating nature, mass shootings have become a major public health hazard {\color{black}that} dramatically impact safety and well-being of individuals and {\color{black}communities.} 
 {\color{black} Given the epidemic traits of this phenomenon,} there have been concerted efforts to understand the root causes that lead to public mass shootings in order to implement effective prevention strategies. {\color{black}We propose} a quantile mixed graphical model {\color{black}for investigating the intricacies of inter- and infra-domain relationships of this complex phenomenon,} {\color{black}where} conditional relations {\color{black}between discrete and continuous variables are modeled without stringent distributional assumptions} using Parzen's definition of mid-quantile. To retrieve the graph structure and recover only the most relevant connections, we consider the neighborhood selection approach in which conditional mid-quantiles of each variable in the network are modeled as a sparse function of all others. We propose a two-step procedure to estimate the graph where, in the first step, conditional mid-probabilities are obtained semi-parametrically and, in the second step, the model parameters are estimated by solving an implicit equation with a LASSO penalty.
\end{abstract}
\keywords{Gun violence, mass murder, mid-CDF, neighborhood selection, quantile regression, the Violence Project}

\section{Introduction}\label{sec:intro}
A violent crime known as a mass shooting occurs when an assailant uses a firearm to kill or injure a number of individuals {\color{black}within the same firing episode}. {\color{black}There are different definitions based on several criteria and thresholds that qualify an incident with multiple victims as a mass shooting.} Consistently with the {\color{black}convention} agreed upon by criminologists and the Congressional Research Service \citep{CRS}, {\color{black}as well as by the maintainers of the Violence Project Database (VPD)}, throughout this paper we define a \emph{mass public shooting} (MPS) as
{\color{black}
\begin{quote}
a multiple homicide incident in which four or more victims are murdered with firearms--not including the offender(s)--within one event, and at least some of the murders occurred in a public location or locations in close geographical proximity (e.g., a workplace, school, restaurant, or other public settings), and the murders are not attributable to any other underlying criminal activity or commonplace circumstance (armed robbery, criminal competition, insurance fraud, argument, or romantic triangle).
\end{quote}
}

Over the last {\color{black} fifty or so years, in the United States (US)} there have been more than 180 {\color{black}MPSs}, the most deadly {\color{black}of which} occurred {\color{black} in the new millennium}. This phenomenon has caused a dramatic loss of human life and has heavily {\color{black}wreaked havoc the lives of many more people}. The ongoing backdrop of violence has deleterious effect{\color{black}s not only} at the individual level, {\color{black}in that} the sense of {\color{black}personal} well-being and safety {\color{black}is progressively eroded in people's perceptions}, but also at the institutional and societal levels {\color{black}as the proposed solutions and policies surrounding this problem have polarized the public opinion}. In addition, it is also alarming that US mass shootings are becoming more frequent {\color{black} and} deadlier. The availability of {\color{black}assault and} semi-automatic weapons with large-capacity magazines and high-volume rapid fire {\color{black}has contributed to sharply increase} the death count per shooting {\color{black}in the last two decades}. In response to {\color{black}the} recent spike {\color{black}in MPSs}, authorities and policy makers have been {\color{black}called on} to identify the {\color{black}key} factors behind mass shootings and {\color{black}to offer} solutions to prevent them from happening.

Previous articles in the literature have investigated the associations between measures of gun availability, gun ownership and firearm violence \citep{lin2018have, reeping2019state}. \cite{lankford2020have} documented how public mass shootings became more deadly over time due to societal changes that have led to more shooters motivated by fame or attention, as well as to those who have been directly influenced by previous attackers. In the wake of shootings in schools such as those at Columbine, Virginia Tech and Uvalde, a number of researchers have sought to identify the features that school shooters have in common in terms of family life, personalities, histories, and behaviors \citep{langman2009rampage, katsiyannis2018historical, peterson2021presence}. Very recently, \cite{stoffi2023} employed causal inference approaches to assess the effect of distance between schools and gun retailers and the risk of gun incidents in schools. {\color{black}Other common} civilian {\color{black} (thus vulnerable and generally unprotected) targets} {\color{black}include shopping centres and places of work, often because of a direct and personal connection of the offenders with those sites} \citep{capellan2021investigation}. There are also a number of studies {\color{black}in} psychiatry, psychology, public health, and sociology, that investigated the possible reasons behind the rise in mass shooting incidents. \cite{fox1998multiple} wrote that the motives for mass murder are organized around five primary themes (revenge, power, loyalty, terror, profit) that can occur singly or in combination, indicating a detailed level of planning and mental preparation. \cite{fox2003mass} then outlined a number of common demographics, psychological and behavioral features in the profile of a mass shooter. {\color{black}The} assailants, almost all of which are male, tend to share characteristics from past trauma and personal crises, including depression, resentment, social isolation, the tendency to externalize blame, fascination with graphically violent entertainment, and a keen interest in weaponry. As discussed in \cite{fox2014mass}, however, these indicators may constitute warning signs but tend to over-predict potential perpetrators as they are prevalent in a large portion of the population. Indeed, \cite{metzl2015mental} and \cite{metzl2021mental} have more recently suggested that mental illness alone does not provide sufficient evidence to explain and prevent mass shootings. In general, focusing on individual domains tends to ignore the important personal, physical and social contexts surrounding the attack \citep{hirschtritt2018reassessment}. On the contrary, the mental health and psychological dimensions are complexly interrelated with {\color{black}the assailant's} personal {\color{black}and familial} history, {\color{black}social and economic networks, and} local gun culture. As discussed in \cite{peterson2021violence}, there is no single profile of a shooter or predictor of mass shootings that can reduce the problem to a unidimensional perspective. If these complexities are not properly taken into account, the understanding of the real extent of these tragic events, and consequently how to effectively prevent them, will be limited or misguided.

In the present article, we considered data from the Violence Project's mass shooter database in the US. The Violence Project is a nonprofit research center dedicated to public criminology and data-driven violence prevention. The database {\color{black}was} built using public records and open-source data{\color{black}. It} includes {\color{black}a number of} variables {\color{black}pertaining} to mass murders occurred in a public location from 1966 to this day, {\color{black}making it the largest and the most comprehensive to date}. This database is {\color{black}now} at the {\color{black}base} of an extensive literature on gun violence {\color{black}evidence-based research} \citep[see, among others,][]{peterson2021communication, hoops2021consensus, jewett2022us, peterson2022psychosis}. {\color{black}When it comes to} statistical modeling, the analysis of mass shootings data poses several challenges. First, the motives for these tragic events are rooted in complex, multi-layered processes, which involve factors that pertain to different domains. Confining the analysis to univariate methods completely ignores the dependence structure with other factors and only conveys a partial view of such complex phenomenon. Second, empirical {\color{black}distributions} often exhibit {\color{black}``non-standard'' features like} heavy {\color{black}tailedness}, skewness {\color{black}and heteroskedasticity}, which cannot be accommodated {\color{black}easily} by {\color{black}standard} parametric models {\color{black}such as those based on normality assumptions}. {\color{black} Lastly, the analysis of the interdependencies of public mass shootings data involves both discrete and continuous variables, and this calls for appropriate modeling approaches.}

{\color{black}In} the present article, {\color{black}we} seek to address all these important aspects {\color{black}by developing} a new graphical modeling approach {\color{black}that} allows us to {\color{black}identify and quantify} the intricate relationships between multiple risk factors{\color{black}, whether it be} continuous {\color{black}or} discrete, {\color{black} and that is applicable under general distributional assumptions.} Graphical models have become a popular and effective tool for the statistical analysis of conditional dependence relations among variables \citep[see][]{lauritzen1996graphical, jordan1999learning, whittaker2009graphical, koller2009probabilistic}. {\color{black}In} this {\color{black}kind of analytic} framework, interrelationships {\color{black}among variables of, say, physical, biological or social phenomena} can be {\color{black}represented as networks} through a graph whose nodes correspond to variables {\color{black}while} edges connecting them {\color{black}depict} interactions.

Unfortunately, the literature {\color{black}on} graphical models {\color{black}for mixed variables} (continuous, count and categorical) is fairly limited. In parallel efforts, \cite{yang2014mixed} and \cite{chen2015selection} introduced the class of \emph{mixed graphical models} (MGMs), {\color{black}where} the univariate conditional distribution of each variable given the {\color{black}others is specified} as {\color{black}one} of the exponential {\color{black}families} of distributions. The graph structure is then estimated by fitting (regularized) \emph{generalized linear models} (GLMs) for each node to perform neighborhood selection \citep{meinshausen2006high}. Subsequently, in a related line of research, \cite{lee2015learning} and \cite{cheng2017high} proposed a generalization of the conditional Gaussian model of \cite{lauritzen1989mixed} for mixed data. However, these proposals suffer from practical limitations {\color{black}of the parametric forms of those distributions such as, for example, the inability of the normal to model effects that go beyond location and scale, or the quite stringent constraint that the single-parameter Poisson distribution imposes on the simultaneous modeling of location, scale and shape.}

In this paper, we generalize MGMs by introducing a quantile-based graphical model for mixed variables that tackles conditional dependency structures, without making assumptions on the functional form of the distributions. {\color{black} 
 The use of quantiles provides the means to study the entire conditional distribution of the network variables. Compared to moments-based networks, a quantile graphical framework may shed light on relationships that can otherwise be missed by modeling a limited number of moments, and is more apposite to describe these relationships in the tails of the distributions.} 
 In the context of undirected graphs, quantile graphical models have been proposed by \cite{ali2016multiple} and \cite{chun2016graphical} to recover conditional independencies, even for heteroskedastic, non-Gaussian data, but they are confined to the modeling of continuous variables only. While continuous responses have been the focus of the majority of progress in quantile modeling \citep{koenker2005quantile, koenker2017handbook, furno2018quantile, uribe2020quantile, merlo2023unified}, the discrete case has received {\color{black}comparatively} much less attention thus far. Quantile methods for discrete responses do, in fact, present some hurdles, such as the lack of a general theory for handling different types of discreteness, practical estimation challenges, and the troublesome asymptotic behavior of sample quantiles in the presence of ties. Here we start from the work of \cite{geraci2022mid} who developed a quantile regression method for discrete responses by extending Parzen's definition of marginal mid-quantiles (\citealp{parzen1993change}). Intuitively, mid-quantiles can be viewed as fractional order statistics and have been extensively studied by \cite{ma2011asymptotic}. In this context, using mid-quantiles comes with desirable advantages as opposed to existing approaches, based on either latent constructs \citep{kordas2006smoothed} or jittering \citep{machado2005quantiles}. Indeed, the former generally relies on maximum score estimation which is computationally expensive as it involves nonconvex loss functions, while the latter may lack generality as it requires that adjacent values in the support of the response variable are equally spaced. On the contrary, mid-quantiles are computationally efficient, well-behaved asymptotically \citep{ma2011asymptotic} and, most importantly, offer a unifying theory for quantile estimation with discrete or continuous dependent variables.
 
 Given the complexity of the phenomenon under study, {\color{black} the identification of the root causes of MPSs may require including a large number of variables, which, however, can potentially muddy the interpretation of the results.} 
 To identify only the most important conditional dependence relations and therefore induce sparsity in the network, we model the conditional mid-quantiles of each variable as a sparse function of all others and fit separate regularized regressions using the neighborhood selection method of \cite{meinshausen2006high}. For each variable, the parameters are estimated via a two-step procedure where conditional mid-probabilities are first obtained semi-parametrically using logistic regression and then regression coefficients are estimated by solving a penalized implicit equation {\color{black}based on the} least absolute shrinkage and selection operator (LASSO, \citealt{tibshirani1996regression}). The proposed method allows us to embed in a common graphical framework both continuous (possibly, e.g., heavy-tailed, skewed, {\color{black}or} multimodal) and discrete (e.g., binary, ordinal, count) variables, thus offering a much richer class of {\color{black}distribution-free} conditional distribution estimates than {\color{black}those based solely on} the conditional mean.

The rest of this paper is organized as follows. In Section \ref{sec:data} we introduce the VPD in detail. {\color{black}We formally introduce the proposed model in Section \ref{sec:met} and the estimation procedure in Section \ref{sec:est}. We assess the proposed methods in a simulation study (Section \ref{sec:sim}) and then analyze real data on MPSs in the US between 1966 and 2022 (Section \ref{sec:app}). We offer some final remarks in Section \ref{sec:con}.}

\section{The data}\label{sec:data}
{\color{black}As defined in the \texttt{theviolenceproject.org} website, the Violence Prevention Project Research Center (``The Violence Project'') is a nonpartisan nonprofit dedicated to reducing violence through research that is public-facing and informs policy and practice.} Co-founded by psychologist Dr. Jillian Peterson and sociologist Dr. James Densley, it has developed an integrated, interdisciplinary, understanding of violence and a holistic approach to addressing it. The areas of expertise cover gun violence, violent extremism, cyber violence, trauma and mental illness, street gangs and youth violence. The Violence Project conducts high-quality, high impact, research for all the public, enhancing and supporting education and training programs for schools {\color{black}and} universities, workplaces and retail stores in their strategic response to violence.

The VPD, funded by the {\color{black}US} National Institute of Justice, is free {\color{black}and} public{\color{black}ly} available {\color{black}at \texttt{theviolenceproject.org}. The database, which provides data on} mass shootings in the US from 1966 to the present day, {\color{black}is} widely regarded as the largest and most comprehensive database of mass shooters to date, {\color{black}collecting} more than 150 variables {\color{black}that include} demographic characteristics of victims and offenders, life history information and mental health indicators of the perpetrators, and the types of firearms used in the shootings. The data are primarily collected from first-person accounts such as diaries, suicide notes, social media and blog posts, audio and video recordings, interview transcripts, and personal correspondence with the perpetrators. In addition, information is gathered also using media coverage (television, newspapers, magazines){\color{black},} documentaries{\color{black},} biographies{\color{black},} books and articles{\color{black},} court transcripts{\color{black},} law enforcement records{\color{black},} medical records{\color{black},} school records{\color{black},} and autopsy reports.

The dataset considered in {\color{black}our analysis was} restricted to $188$ mass shootings occurred from 1966 to 2022 {\color{black}that were} committed by male perpetrators {\color{black}only} ({\color{black}this is because, overall,} 97\% of shooters {\color{black}in the database} are men). Due to incomplete records, we remove{\color{black}d} the Wilkinsburg shooting occurred on March 9, 2016. In order to understand the complexities beneath MPSs, we considered the following variables. For each attack, we include{\color{black}d} the number of killed and injured people, the number of firearms brought to the scene and the median age {\color{black}(years)} of the victims. Concerning the offender personal characteristics we consider{\color{black}ed} age {\color{black}(years)}, two binary variables controlling for immigrant {\color{black}status} {\color{black}(reference: no)} and the relationship {\color{black}(including marriage)} status {\color{black}(reference: not in a relationship)}. The analysis also {\color{black}took} into account perpetrator specific traits linked to past personal experiences {\color{black}that span} six domains: social contagion (7 items), crime and violence (14 items), traumas and adverse childhood (15 items), crisis (11 items), health and mental health (14 items) and grievance/motivation (14 items). In particular, the items {\color{black}in each of} the six domain {\color{black}were} first {\color{black}dichotomized based on} whether there was evidence that the symptoms in question had been noted prior or during the shooting. {\color{black}Subsequently, these domain-specific items were averaged to give a summary of each of those six domains}. Finally, {\color{black}we included} a binary variable {\color{black}indicating whether the shooter had a prior relationship with the shooting site (reference: no).} {\color{black}This gave 14 variables in all and are described in Table \ref{tab:desc}}.

\begin{table}[h!]
\centering
\resizebox{0.8\columnwidth}{!}{%
\begin{tabular}{l|l}
  \toprule
Variable & Description\\
\hline
Killed & Number of fatalities\\
Injured & Number of people injured\\
Firearms brought to the scene & Total weapons brought to the scene\\
Age & Age in years of the perpetrator\\
Victims age & Median age in years of the victims per attack\\
Insider & Shooter has an existing relationship with the shooting site \\
Immigrant & Shooter is an immigrant \\
Relationship status & Shooter is single/divorced/separated/widowed or in a relationship/married \\
  Social & Social contagion\\
  \qquad Leakage & \\
  \qquad Interest in Past Mass Violence \\
  \qquad Relationship with Other Shooting(s) \\
  \qquad Legacy Token & \\
  \qquad Pop Culture Connection & \\
  \qquad Planning & \\
  \qquad Performance & \\
  Crime & Crime and violence\\
  \qquad Known to Police or FBI & \\
  \qquad Criminal Record & \\
  \qquad Part I Crimes & \\
  \qquad Part II Crimes & \\
  \qquad Highest Level of Justice System Involvement & \\
  \qquad Suspected/Arrested/Charged/Convicted & \\
  \qquad History of Physical Altercations & \\
  \qquad History of Animal Abuse & \\
  \qquad History of Domestic Abuse & \\
  \qquad History of Sexual Offenses & \\
  \qquad Gang Affiliation & \\
  \qquad Terror Group Affiliation & \\
  \qquad Known Hate Group or Chat Room Affiliation & \\
  \qquad Bully & \\
  Traumas & Trauma and adverse childhood\\
  \qquad Bullied & \\
  \qquad Raised by Single Parent & \\
  \qquad Parental Divorce/Separation & \\
  \qquad Parental Death in Childhood & \\
  \qquad Parental Suicide & \\
  \qquad Childhood Trauma & \\
  \qquad Physically Abused & \\
  \qquad Sexually Abused & \\
  \qquad Emotionally Abused & \\
  \qquad Neglected & \\
  \qquad Mother Violent Treatment & \\
  \qquad Parental Substance Abuse & \\
  \qquad Parent Criminal Record & \\
  \qquad Family Member Incarcerated & \\
  \qquad Adult Trauma & \\
  Crisis & Sign of a crisis\\
  \qquad Recent or Ongoing Stressor & \\
  \qquad Signs of Being in Crisis & \\
  \qquad Inability to Perform Daily Tasks & \\
  \qquad Notably Depressed Mood & \\
  \qquad Unusually Calm or Happy & \\
  \qquad Rapid Mood Swings & \\
  \qquad Increased Agitation & \\
  \qquad Abusive Behavior & \\
  \qquad Isolation & \\
  \qquad Losing Touch with Reality & \\
  \qquad Paranoia & \\
  Mental & Health and mental health\\
  \qquad Suicidality & \\
  \qquad Prior Hospitalization & \\
  \qquad Voluntary or Involuntary Hospitalization & \\
  \qquad Prior Counseling & \\
  \qquad Voluntary or Mandatory Counseling & \\
  \qquad Psychiatric Medication & \\
  \qquad Treatment 6 Months Prior to Shooting & \\
  \qquad Mental Illness & \\
  \qquad Fetal Alcohol Spectrum Disorder & \\
  \qquad Known Family Mental Health History & \\
  \qquad Autism Spectrum & \\
  \qquad Substance Use & \\
  \qquad Health Issues & \\
  \qquad Head Injury/Possible TBI & \\
  Motivation & Grievance and motivation\\
  \qquad Known Prejudices & \\
  \qquad Racism/Xenophobia & \\
  \qquad Religious Hate & \\
  \qquad Misogyny & \\
  \qquad Homophobia & \\
  \qquad Employment Issue & \\
  \qquad Economic Issue & \\
  \qquad Legal Issue & \\
  \qquad Relationship Issue & \\
  \qquad Interpersonal Conflict & \\
  \qquad Fame-Seeking & \\
  \qquad Other & \\
  \qquad Unknown & \\
  \qquad Role of Psychosis in the Shooting & \\
 \bottomrule
\end{tabular}%
}
\caption{Description of the variables in the sample.}
\label{tab:desc}
\end{table}

\begin{table}[h!]
\centering
\resizebox{1.0\columnwidth}{!}{%
\begin{tabular}{lrrrrrr}
  \toprule
Variable & Minimum & First quartile & Mean & Median & Third quartile & Maximum \\
  \hline
  Killed & 4.00 & 4.00 & 7.22 & 5.00 & 7.00 & 60.00 \\
  Injured & 0.00 & 1.00 & 11.34 & 3.00 & 7.00 & 867.00 \\
  Firearms brought to the scene & 1.00 & 1.00 & 2.23 & 2.00 & 3.00 & 24.00 \\
  Age & 11.00 & 23.00 & 33.61 & 32.00 & 43.00 & 70.00 \\
  Victims age & 6.00 & 28.00 & 38.72 & 39.25 & 49.00 & 88.00 \\
  Social & 0.00 & 0.00 & 0.23 & 0.14 & 0.29 & 1.00 \\
  Crime & 0.00 & 0.14 & 0.28 & 0.29 & 0.43 & 0.79 \\
  Traumas & 0.00 & 0.00 & 0.12 & 0.07 & 0.20 & 0.73 \\
  Crisis & 0.00 & 0.27 & 0.41 & 0.46 & 0.55 & 0.82 \\
  Mental & 0.00 & 0.14 & 0.27 & 0.21 & 0.43 & 0.79 \\
  Motivation & 0.07 & 0.12 & 0.16 & 0.14 & 0.21 & 0.36 \\
  \\
 & Frequency & Proportion (\%) \\
  \hline
  Insider (no) & 82 & 43.62 \\
  Immigrant (no) & 28 & 14.89 \\
  Relationship status (single) & 57 & 30.32 \\
  \bottomrule
\end{tabular}%
}
\caption{Summary statistics of the variables in the sample. For binary variables we report the absolute frequency, proportion in percentage and the baseline category in parenthesis.}
\label{tab:summary}
\end{table}

\section{Methods}\label{sec:met}
In this section we illustrate the proposed quantile mixed graphical model {\color{black}(QMGM)}. {\color{black}First,} we extend the mid-quantile regression of \cite{geraci2022mid} to the graphical modeling framework with both continuous and discrete variables. Subsequently, using the neighborhood selection approach of \cite{meinshausen2006high}, we show how to estimate a sparse mixed graphical model characterizing conditional independence relations among variables via node-wise penalized mid-quantile regressions.\\ 

Let $\bs Y = (X_1, \dots, X_{p_1}, Z_1, \dots, Z_{p_2})'$ denote a $p$-dimensional random vector, where $X_1, \dots, X_{p_1}$ are $p_1$ {\color{black}absolutely} continuous variables and $Z_1, \dots, Z_{p_2}$ are $p_2$ discrete variables. {\color{black} In particular, $Y_j$, for $j=1,\dots,p$, can be continuous, binary, ordinal, or count, with positive or negative values, not necessarily equally spaced {\color{black}and not necessarily} integers. However, as in \cite{geraci2022mid}, we exclude variables with {\color{black}a nominal scale}.} Also, let $\mathcal{G} = (V, E)$ denote an undirected graph where $V = \{1,\dots,p\}$ is the set of nodes such that each component of the random variable $\bs Y$ corresponds to a node in $V$, and $E \subseteq V \times V$ represents the set of undirected edges. Following \cite{geraci2022mid}, we {\color{black}define} the conditional mid-cumulative distribution function (mid-CDF, \citealp{parzen1993change, parzen2004quantile}) of $Y_j$ given all other variables as
\begin{equation}\label{eq:midCDF}
G_{Y_j \mid \bs Y_{\neg j}} (y_j \mid \bs y_{\neg j}) = F_{Y_j \mid \bs Y_{\neg j}} (y_j \mid \bs y_{\neg j}) - 0.5 m_{Y_j \mid \bs Y_{\neg j}} (y_j \mid \bs y_{\neg j}),
\end{equation}
where $\bs Y_{\neg j}$ denotes all variables except $Y_j$, $F_{Y_j \mid \bs Y_{\neg j}} (\cdot \mid \cdot)$ is the conditional CDF of $Y_j$ {\color{black}given $\bs Y_{\neg j}$} and $m_{Y_j \mid \bs Y_{\neg j}} (y_j \mid \bs y_{\neg j}) = \mbox{Pr} (Y_j = y_j \mid\bs Y_{\neg j} = \bs y_{\neg j})$. The definition of conditional mid-CDF in \eqref{eq:midCDF} applies to both continuous and discrete variables. Indeed, if $Y_j$ is discrete, $G_{Y_j \mid \bs Y_{\neg j}} (y_j \mid \bs y_{\neg j})$ is a step function {\color{black}(shifted downwards relative to $F$)} while it reduces to $F_{Y_j \mid \bs Y_{\neg j}} (y_j \mid \bs y_{\neg j})$ if $Y_j$ is continuous since $\mbox{Pr} (Y_j = y_j \mid\bs Y_{\neg j} = \bs y_{\neg j}) = 0$.

Let $\mathcal{S}_{Y_j}$ be the set of $s$ distinct values in the population that the random variable $Y_j$ can take on. In particular, $\mathcal{S}_{Y_j}$ can be a finite or a countably infinite $(s = \infty)$ subset of $\mathbb{R}$.

Then, the conditional mid-quantile function (mid-QF) of $Y_j$, $H_{Y_j \mid \bs Y_{\neg j}} (\tau)$, is defined as the piecewise linear function connecting the values $G^{-1}_{Y_j \mid \bs Y_{\neg j}} (\pi_{jh} \mid \bs y_{\neg j})$, where $\pi_{jh} = G_{Y_j \mid \bs Y_{\neg j}} (y_j \mid \bs y_{\neg j})$, $h = 1,\dots,s$, for a given quantile level $\tau \in (0,1)$. We model the $\tau$-th conditional mid-quantile of $Y_j$ given all the other variables {\color{black}with}
\begin{equation}\label{eq:model}
H_{g_j (Y_j) \mid \bs Y_{\neg j}} (\tau) = \beta_j^0(\tau) + \bs y'_{\neg j} \boldsymbol \beta_j (\tau), \quad j=1,\dots,p,
\end{equation}
where $g_j (\cdot)$ is a known monotone and differentiable \qmo link\qmcsp function, and $\boldsymbol \beta_j(\tau) = (\beta_j^1(\tau), \dots, \beta_j^{p-1}(\tau))'$ is a vector of $p-1$ unknown regression coefficients, with $\beta_j^0(\tau)$ being an intercept term, for a given $\tau$. In our approach, $g_j (\cdot)$ may simply be the identity or logarithmic function, which is typically used in the modeling of counts, or the logistic function, but other more flexible transformations can also be employed \citep{Mu2007, geraci2015improved}.

To study conditional independence relations between the components of $\bs Y$ through the graph $\mathcal{G}$, we establish a result that allows us to make inference on the edge structure $E$ using mid-quantile regressions. Following \cite{ali2016multiple} and \cite{chun2016graphical}, the next proposition characterizes the relationship between the conditional mid-quantile function in \eqref{eq:model} and the conditional independence between any pair of variables in $\bs Y$ given the rest.
\begin{proposition}\label{prop:indep}
Suppose that the conditional mid-QF of a random variable $Y_j$, for some $j=1,\dots,p$, is defined by the mid-quantile regression model in \eqref{eq:model}. Then, $Y_j$ is conditionally independent from $Y_k$, with $k=1,\dots,p$ and $k \neq j$, given all of the other variables if and only if $\beta_j^k(\tau) = 0$ for all $\tau \in (0,1)$.
\end{proposition}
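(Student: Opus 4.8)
The plan is to track how conditional independence propagates along the chain $F,\,m \;\to\; G \;\to\; H \;\to\; \bs\beta_j$ and back, where $G$, $H$ and the coefficients are those appearing in \eqref{eq:midCDF} and \eqref{eq:model}. Write $\bs Y_{\neg j} = (Y_k, \bs Y_{\neg\{j,k\}})$ to single out the coordinate $Y_k$, and recall that $Y_j$ is conditionally independent of $Y_k$ given the rest exactly when the conditional CDF $F_{Y_j \mid \bs Y_{\neg j}}(\cdot \mid \bs y_{\neg j})$, equivalently the conditional mass $m_{Y_j \mid \bs Y_{\neg j}}(\cdot \mid \bs y_{\neg j})$, does not vary with $y_k$. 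Since $g_j$ is a fixed monotone bijection, $Y_j \perp Y_k \mid \bs Y_{\neg\{j,k\}}$ iff $g_j(Y_j) \perp Y_k \mid \bs Y_{\neg\{j,k\}}$, and $g_j$ leaves the conditional probabilities at corresponding support points unchanged; I would use this to replace $Y_j$ by $W_j = g_j(Y_j)$ (with support $g_j(\mathcal{S}_{Y_j})$) and argue directly for $W_j$, so that \eqref{eq:model} is a genuine affine model for the mid-QF of $W_j$.

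For the forward implication, suppose $Y_j \perp Y_k \mid \bs Y_{\neg\{j,k\}}$. Then $F$ and $m$ are independent of $y_k$, so by \eqref{eq:midCDF} the mid-CDF $G$ is independent of $y_k$, and hence so is the mid-QF $H$, which is the piecewise-linear interpolant determined by $G$. Under \eqref{eq:model}, $H_{g_j(Y_j)\mid\bs Y_{\neg j}}(\tau)$ is affine in $\bs y_{\neg j}$ with coefficient $\beta_j^k(\tau)$ on $y_k$; its being constant in $y_k$ at every level, as $y_k$ ranges over the support of $Y_k$, forces $\beta_j^k(\tau) = 0$ for all $\tau \in (0,1)$.

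The backward implication carries the real work. Assume $\beta_j^k(\tau) = 0$ for every $\tau$, so that $H_{g_j(Y_j)\mid\bs Y_{\neg j}}(\tau)$ does not depend on $y_k$ for any $\tau$. I would then recover the conditional law from this function. By construction $H$ interpolates the knots $(\pi_{jh}, y_{jh})$, where the ordinates $y_{jh}$ range over the \emph{fixed} population support $\mathcal{S}_{Y_j}$ and $\pi_{jh} = G(y_{jh}\mid\bs y_{\neg j})$. Because the $y_{jh}$ are distinct and increasing, $H$ is strictly increasing and invertible, so each abscissa is recovered as $\pi_{jh} = H^{-1}(y_{jh})$; as $H$ is $y_k$-free and the $y_{jh}$ are fixed, every $\pi_{jh} = G(y_{jh}\mid\bs y_{\neg j})$ is therefore $y_k$-free. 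It remains to invert the passage from $G$ to the conditional law. In the continuous case this is immediate, since $G = F$ and $m \equiv 0$. In the discrete case I would use the recursion implicit in \eqref{eq:midCDF}: at the smallest support point $m(y_{j1}) = 2G(y_{j1})$ (where $F = m$), and thereafter $m(y_{jh}) = 2\{G(y_{jh}) - G(y_{j,h-1})\} - m(y_{j,h-1})$, exhibiting $m$, and hence $F$, as fixed functions of the $y_k$-free quantities $G(y_{jh}\mid\bs y_{\neg j})$. Thus the conditional law of $Y_j$ is free of $y_k$, i.e. $Y_j \perp Y_k \mid \bs Y_{\neg\{j,k\}}$.

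The step that deserves the most care is this inversion in the backward direction: one must argue that independence of the interpolant $H$ from $y_k$ genuinely pins down the knot abscissae $\pi_{jh}$ --- which rests on the ordinates being the fixed population support and on strict monotonicity making $H$ invertible --- and then that the linear recursion faithfully reconstructs $m$ from $G$. I would additionally dispatch the harmless edge cases, namely support points carrying zero conditional mass (which merely create repeated knots) and the countably infinite support $s = \infty$ (which calls for a routine limiting argument), neither of which affects the conclusion.
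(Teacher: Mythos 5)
Your proof is correct and follows essentially the same route as the paper's: both directions track how independence from $y_k$ propagates along the chain $F, m \to G \to H \to \beta_j^k(\tau)$ and back, with the continuous case collapsing to ordinary conditional quantiles. The only difference is one of explicitness: where the paper simply asserts that $G$ is ``purely determined'' by $H$ and that a $y_k$-free mid-CDF forces a $y_k$-free $F$, you justify both inversions (recovering the knot abscissae as $H^{-1}(y_{jh})$ at the fixed population support points, and reconstructing $m$, hence $F$, from $G$ by the recursion $m(y_{jh}) = 2\{G(y_{jh}) - G(y_{j,h-1})\} - m(y_{j,h-1})$), thereby filling in the steps the paper glosses over.
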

\begin{proof}
If the random variable $Y_j$ is {\color{black}absolutely} continuous, then the result follows from the fact that the mid-CDF in \eqref{eq:midCDF} reduces to the conditional CDF, $F_{Y_j \mid \bs Y_{\neg j}} (y_j \mid \bs y_{\neg j})$, and the conditional mid-quantile reduces to the standard conditional quantile of $Y_j$ given all other variables. In this case, if the conditional quantiles satisfy $H_{g_j (Y_j) \mid \bs Y_{\neg j}} (\tau) = H_{g_j (Y_j) \mid \bs Y_{\neg {\color{black}\{j,k\}}}} (\tau)$ for all $\tau \in (0,1)$, then the conditional CDF must obey the same property, i.e., $F_{Y_j \mid \bs Y_{\neg j}} (y_j \mid \bs y_{\neg j}) = F_{Y_j \mid \bs Y_{\neg \{j,k\}}} (y_j \mid \bs y_{\neg \{j,k\}})$. The converse of this statement is true as well by reversing all the arguments.

When $Y_j$ is discrete, if $\beta_j^k(\tau) = 0$ for all $\tau \in (0,1)$ then the conditional mid-quantiles of $Y_j$ do not depend on $Y_k$, i.e., $H_{g_j (Y_j) \mid \bs Y_{\neg j}} (\tau) = H_{g_j (Y_j) \mid \bs Y_{\neg \{j,k\}}} (\tau)$. Since the mid-CDF $G_{Y_j \mid \bs Y_{\neg j}} (y_j \mid \bs y_{\neg j})$ is purely determined by $H_{g_j (Y_j) \mid \bs Y_{\neg j}} (\tau)$, we have that
\begin{equation*}
G_{Y_j \mid \bs Y_{\neg j}} (y_j \mid \bs y_{\neg j}) = F_{Y_j \mid \bs Y_{\neg j}} (y_j \mid \bs y_{\neg j}) - 0.5 m_{Y_j \mid \bs Y_{\neg j}} (y_j \mid \bs y_{\neg j})
\end{equation*}
and the left-hand side does not depend on $Y_k$, so neither can the right-hand side. This implies that the distribution $F_{Y_j \mid \bs Y_{\neg j}} (y_j \mid \bs y_{\neg j})$ equals $F_{Y_j \mid \bs Y_{\neg \{j,k\}}} (y_j \mid \bs y_{\neg \{j,k\}})$, i.e., $Y_j$ and $Y_k$ are conditionally independent given all of the other variables. To complete the proof we note that the converse of the statement is true as well by reversing the arguments.
\end{proof}

The proof of Proposition \ref{prop:indep} follows from the relationship between the conditional mid-quantile and the mid-CDF of each node given the others. Most importantly, from Proposition \ref{prop:indep} it follows that the elements of the vector $\boldsymbol \beta_j$ that are zero for all $\tau \in (0,1)$ correspond to conditional independence relations between the components of $\bs Y$. Hence, the edge set $E$ of the graph $\mathcal{G}$ is completely determined by the non-zero components in $\boldsymbol \beta_j (\tau)$, that is, $(j, k) \in E$ if and only if $\beta_j^k(\tau) \neq 0$. Based on this result, we can build a mixed quantile graphical model to characterize conditional independence relationships between the elements of $\bs Y$ by inferring the zero elements in $\boldsymbol \beta_j (\tau)$, $j=1,\dots,p$.

We exploit the neighborhood selection approach of \cite{meinshausen2006high} by running separate mid-quantile regressions of each component in $\bs Y$ on all the others. Specifically, let $\boldsymbol \tau = (\tau_1, \dots, \tau_L)$ be a grid of $L$ ordered quantile levels with $\tau_l \in (0,1)$, $l=1,\dots,L$. Large values of $L$ allow us to investigate conditional independence more accurately, but they also increase the computational cost of estimating the model. To infer the graph structure, we consider the linear model in \eqref{eq:model} for the conditional mid-QF, $H_{g_j (Y_j) \mid \bs Y_{\neg j}} (\tau_l)$, over all variables $j=1,\dots,p$ and levels $l=1,\dots,L$. Consequently, the corresponding edge set $E$ of conditional dependencies is defined as
\begin{equation}\label{eq:edgeset}
E = \Big\{ (j,k): \underset{l=1,\dots,L}{\max} \{ \max \{ \mid \beta_j^k (\tau_l) \mid, \mid \beta_k^j (\tau_l) \mid \} \} > 0, \quad \textnormal{for} \quad 1 \leq j \neq k \leq p \Big\}.
\end{equation}
With respect to existing approaches in the literature, if all variables are continuous our model reduces to the sparse quantile-based graphical model of \cite{chun2016graphical} and the multiple quantile graphical model of \cite{ali2016multiple} when we estimate the $\tau_l$-th conditional quantile of each variable $Y_j$ given the other variables, using LASSO penalized linear regressions. 

In the next section, we describe a procedure to estimate the proposed graphical model $\mathcal{G}$ and induce sparsity in the regression coefficients.

\section{Estimation}\label{sec:est}
Consider a sample $\bs Y_i, i = 1,\dots,n$, with corresponding observations $\bs y_i$. For each {\color{black}component of $\bs Y_{i}$} and level $\tau_l$, $l=1,\dots,L$, estimation of the model in \eqref{eq:model}, and in turn, of the set $E$ in \eqref{eq:edgeset}, proceeds in two steps.

Let $z_{jh}$, $h=1,\dots,k$, be the {\color{black}$h$th} distinct observation of $Y_j$ that occurs in the sample, with $z_{jh} < z_{jh+1}$ for all $h=1,\dots,k-1$. In the first step we estimate the mid-CDF in \eqref{eq:midCDF}, $\widehat{G}_{Y_j \mid {\bs Y}_{\neg j}} (y_j \mid {\bs y}_{\neg j})$, where $\widehat{F}_{Y_j \mid {\bs Y}_{\neg j}}$ is obtained by fitting $k$ separate logistic regressions, one for each value of $z_{jh}$, $h=1,\dots,k$, and $\widehat{m}_{Y_j \mid \bs Y_{\neg j}} (z_{jh} \mid \bs y_{\neg j}) = \widehat{F}_{Y_j \mid {\bs Y}_{\neg j}} (z_{jh} \mid \bs y_{\neg j}) - \widehat{F}_{Y_j \mid {\bs Y}_{\neg j}} (z_{jh-1} \mid \bs y_{\neg j})$. This idea was originally considered by \cite{foresi1995conditional} and \cite{peracchi2002estimating} to address the curse of dimensionality of non-parametric estimators and, while originally a logit estimator was proposed, in principle any other link function can be employed. To ensure that the CDF is monotonic, we then monotonize the estimates $\widehat{F}_{Y_j \mid {\bs Y}_{\neg j}}$ by rearrangement \citep{chernozhukov2010quantile}. Alternatively, if $p$ is not large, we also note that the conditional CDF can be non-parametrically estimated using the kernel estimator proposed by \cite{li2008nonparametric}.

In the second step, we define $\widehat{G}^c_{Y_j \mid {\bs Y}_{\neg j}} (y_j \mid {\bs y}_{\neg j})$ as the function interpolating the points $(z_{jh} , \widehat{G}_{Y_j \mid {\bs Y}_{\neg j}} (z_{jh} \mid {\bs y}_{\neg j}))$, where the ordinates have been obtained in the first step. The goal now is to estimate $(\beta^0(\tau_l), \boldsymbol \beta_j(\tau_l))$ in \eqref{eq:model} by solving the implicit equation $\tau_l = \widehat{G}^c_{Y_j \mid {\bs Y}_{\neg j}} (\eta(\tau_l) \mid {\bs y}_{\neg j})$, where $\eta(\tau_l) = g^{-1}_j \{ \beta^0(\tau_l) + {\bs y}'_{\neg j} \boldsymbol \beta_j(\tau_l) \}$. {\color{black}Thus far, the proposed estimation procedure follows \cite{geraci2022mid}. However, since our goal is to capture the most relevant relationships between the variables, we extend \citeauthor{geraci2022mid}'s (\citeyear{geraci2022mid}) objective function by adding a LASSO type penalty on $\boldsymbol \beta_j(\tau_l)$, which results in the following estimator}
\begin{equation}\label{eq:obj}
{\color{black} \widehat{\boldsymbol{\beta}}_j(\tau_l)} = \underset{\boldsymbol \beta}{\argmin} \, \frac{1}{n} \sum_{i=1}^n \Big( \tau_l - \widehat{G}^c_{Y_j \mid {\bs Y}_{\neg j}} (\eta_i \mid {\bs y}_{\neg j}) \Big)^2 + \lambda \mid \mid \mbox{diag}(\bs w) \boldsymbol \beta_j(\tau_l) \mid \mid_1,
\end{equation}
where
\begin{equation}
\widehat{G}^c_{Y_j \mid {\bs Y}_{\neg j}} (\eta_i \mid {\bs y}_{\neg j}) = b_{h_i} (\eta_i - z_{jh_i}) + \widehat{\pi}_{jh_i}, \quad z_{jh_i} \leq \eta_i \leq z_{jh_i+1},
\end{equation}
{\color{black}is an interpolation function,} with $b_{h_i} = \frac{\widehat{\pi}_{jh_i+1} - \widehat{\pi}_{jh_i}}{z_{jh_i+1} - z_{jh_i}}$ and $\widehat{\pi}_{jh_i} = \widehat{G}_{Y_j \mid {\bs Y}_{\neg j}} (z_{jh_i} \mid {\bs y}_{\neg j})$. The penalization in \eqref{eq:obj} {\color{black}allows for} a different weight for each coefficient by using the vector $\bs w$ to avoid that variables of different types are on different scales, and where $\lambda \geq 0$ is the overall tuning parameter of the model. The parameter $\lambda$ controls the strength of the penalization and determines the sparsity of the graph: a higher (lower) value is responsible for a lower (higher) number of edges; when $\lambda = 0$, $\widehat{\boldsymbol \beta}_j(\tau_l)$ reduces to the closed-form estimator in \citet[][eq. 2.9]{geraci2022mid}. Finally, to infer the graph structure we solve the minimization problem in \eqref{eq:obj} for all $Y_j$, $j=1,\dots,p$ and $\tau_l$, $l=1,\dots,L$, and estimate the edge set $E$ as follows:
\begin{equation}\label{eq:edgeset_est}
\widehat{E} = \Big\{ (j,k): \underset{l=1,\dots,L}{\max} \{ \max \{ \mid \widehat{\beta_j^k} (\tau_l) \mid, \mid \widehat{\beta_k^j} (\tau_l) \mid \} \} > 0, \quad \textnormal{for} \quad 1 \leq j \neq k \leq p \Big\}.
\end{equation}

To select the optimal value of the penalty parameter $\lambda$, we adopt{\color{black}ed} the following {\color{black}Bayesian Information Criterion (BIC)}:
\begin{equation}\label{eq:BIC}
\mbox{BIC} (\lambda) = \sum_{l=1}^L \sum_{j=1}^p \Bigg[ \ln \bigg( \sum_{i=1}^n \rho_\tau (y_{ij} - \beta_j^0(\tau_l) - \bs y'_{i\neg j} \boldsymbol \beta_j (\tau_l) ) \bigg) + \nu_{jl} \frac{\ln n \ln (p-1)}{2n} C_n \Bigg],
\end{equation}
where $\rho_\tau (u) = u (\tau - I(u < 0))$ is the quantile loss function \cite{koenker1978regression}, with $I(\cdot)$ being the indicator function, $\nu_{jl}$ is the number of estimated non-zero components in $\widehat{\boldsymbol \beta}_j(\tau_l)$ for node $j$ at quantile level $\tau_l$ and $C_n$ is some positive constant, which diverges to infinity as $n$ increases. Specifically, we fit the model for a grid of candidate values of $\lambda$ and then select the optimal tuning parameter as that corresponding to the lowest BIC value in \eqref{eq:BIC}.

\section{Simulation {\color{black}studies}}\label{sec:sim}
In this section, we illustrate the performance of the proposed {\color{black}methods using} simulated data. We consider a mixed network of $p = 10$ nodes containing $p_1 = p/2 = 5$ continuous variables and $p_2 = p/2 = 5$ discrete variables. Following \cite{chun2016graphical}, the graph {\color{black}was} generated from the following conditional models:
\begin{align*}
F^{-1}_{Y_1} (u_1 \mid {\bs Y}_{\neg 1} &= {\bs y}_{\neg 1}) = F^{-1}_{\mathcal{T}_3} (u_1) \\
F^{-1}_{Y_2} (u_2 \mid {\bs Y}_{\neg 2} &= {\bs y}_{\neg 2}) = - 0.5 u^2_2 (y_1 + 3) \\
F^{-1}_{Y_3} (u_3 \mid {\bs Y}_{\neg 3} &= {\bs y}_{\neg 3}) = y_1 + F^{-1}_{\textit{Gamma} (\sigma_3, 2)} (u_3), \quad \sigma_3 = \mid y_1 \mid + 0.1 \\
F^{-1}_{Y_4} (u_4 \mid {\bs Y}_{\neg 4} &= {\bs y}_{\neg 4}) = 0.1 (y_3 + 5)^2 F^{-1}_{\mathcal{N} (0, \sigma^2_4)} (u_4), \quad \sigma_4 = \sqrt{ \mid y_3 + 5 \mid} \\
F^{-1}_{Y_5} (u_5 \mid {\bs Y}_{\neg 5} &= {\bs y}_{\neg 5}) = 2 \cos (\pi y_1 / 4) (u_5 - 0.5) (y_1 + 2) + F^{-1}_{\mathcal{N} (0, \sigma^2_5)} (u_5), \quad \sigma_5 = 0.1 + 0.1 \mid y_1 \mid \\
F^{-1}_{Y_6} (u_6 \mid {\bs Y}_{\neg 6} &= {\bs y}_{\neg 6}) = \lfloor (u_6 + 0.5) \mid y_1 \mid \rfloor + DU(1, 3)\\
F^{-1}_{Y_7} (u_7 \mid {\bs Y}_{\neg 7} &= {\bs y}_{\neg 7}) = F^{-1}_{Pois} (u_7, \mid y_3 + 5 \mid^{-1/2} + \mid \log(\mid y_5 \mid + 1) \mid ) \\
F^{-1}_{Y_8} (u_8 \mid {\bs Y}_{\neg 8} &= {\bs y}_{\neg 8}) = \lfloor u_8 y_7 + \mid y_2 + 0.5 \mid^{1.3} \rfloor + DU(1, 3) \lfloor 1 + \mid y_5 \mid \rfloor \\
F^{-1}_{Y_9} (u_9 \mid {\bs Y}_{\neg 9} &= {\bs y}_{\neg 9}) = \lfloor 1 + u_9 y_8 \rfloor + DU(1, 5)\\
F^{-1}_{Y_{10}} (u_{10} \mid {\bs Y}_{\neg 10} &= {\bs y}_{\neg 10}) = F^{-1}_{Pois} (u_{10}, \exp (0.8 u_{10} \log (\mid y_9 + 0.1 \mid))),
\end{align*}
where $u_1, \dots, u_{p}$ {\color{black}were} independently drawn from continuous uniform distributions on $(0, 1)$. {\color{black}Moreover, $DU(a,b)$ denotes a random variable with discrete uniform distribution on $(a, b)$, while} $F^{-1}_{\mathcal{T}}$, $F^{-1}_{\textit{Gamma}}$, $F^{-1}_{\mathcal{N}}$ and $F^{-1}_{Pois}$ denote the quantile functions of the Student $t$, the Gamma, the Normal and the Poisson distribution, respectively. The generated graph consists of $12$ edges and it is reported in Figure \ref{fig:sim_graphs} (left plot), with continuous variables {\color{black}marked by} circles and discrete ones {\color{black}by} squares. The width of the edges is proportional to the absolute value of the strength of the interaction between each pair of nodes while the edge color reflects the sign of the interaction (green for positive and red for negative). In the right{\color{black}-hand plot}, we {\color{black}show also} the adjacency matrix representing the conditional dependencies.

\begin{figure}[h!]
\begin{center}
\includegraphics[scale=.375]{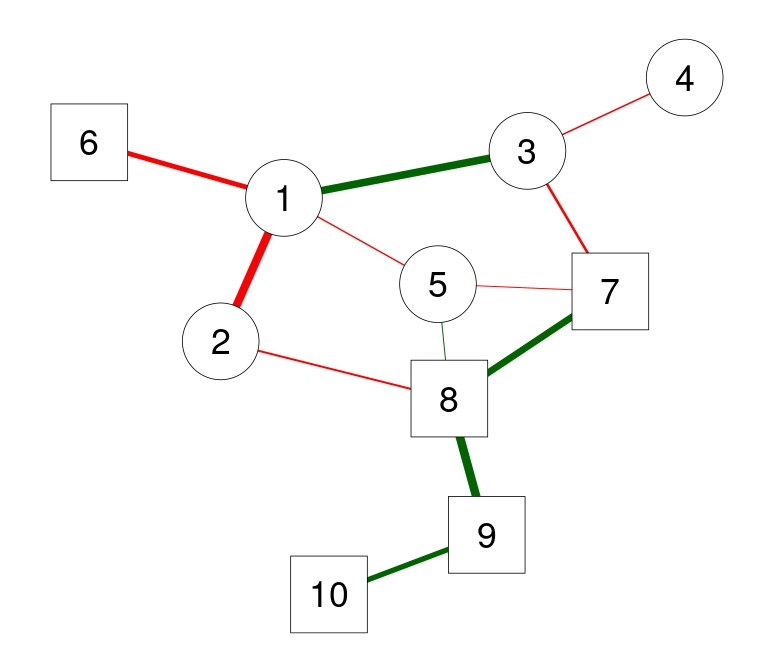}
\includegraphics[scale=.375]{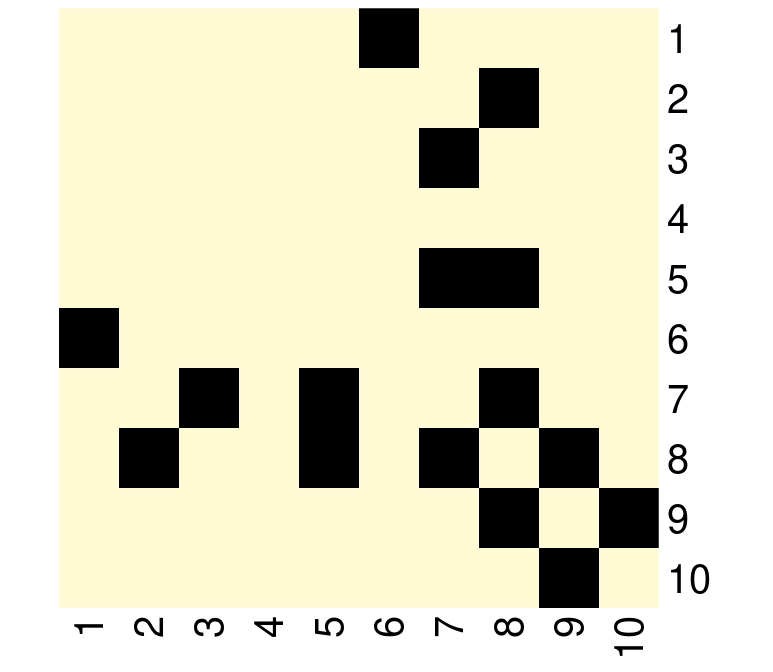}
\caption{Left: {\color{black}graphical representation of the mixed} graph {\color{black}used for the simulation study} with continuous {\color{black}(circles) and discrete (squares)} variables. Right: adjacency matrix {\color{black}that corresponds to the graph on the left}.}
\label{fig:sim_graphs}
\end{center}
\end{figure}

We generated $n \in \{500, 1000\}$ observations {\color{black}from the above-described graph for $R = 100$ replications} and {\color{black}then} fitted the proposed model {\color{black}QMGM}. {\color{black}To investigate the sensitivity of the results to varying number of quantile levels, we considered the following three QMGMs: the median model with $\tau = 0.5$ (QMGM1), the quartile model with $L = 3$ levels $\tau \in \{0.25, 0.5, 0.75\}$ (QMGM3), the octile model with $L = 7$ levels $\tau \in \{0.125, 0.25, \ldots, 0.875\}$ (QMGM7), and a model with $L = 17$ levels $\tau \in \{0.1, 0.15, 0.2, \dots, 0.85, 0.9\}$ (QMGM17). For all models, we used a sequence $\lambda \in \exp(\{\log(0.001), \dots, \log(5)\})$ of 50 equispaced values.} 
Prior to fitting, continuous variables {\color{black}were} centered {\color{black}about} zero and {\color{black}scaled} by their standard deviation. For continuous nodes we {\color{black}took the identity $g(t) = t$} {\color{black}whereas} for discrete nodes we {\color{black}used the log-transform $g(t) = \log(t)$}. The edge set $\widehat{E}$ {\color{black}was} estimated as described in \eqref{eq:edgeset_est}. We compare{\color{black}d} our model with the {\color{black}mean-based} MGM {\color{black}that fits a GLM} on each node with a LASSO penalty. {\color{black}Such a comparison was facilitated by setting} the weight vector $\bs w$ equal to a vector of ones {\color{black} so that the two approaches allow for the same penalty on each coefficient. Finally, we compared five different selection criteria: the BIC penalties $C_n = \log (p-1) = 2.20$ (BICp), $C_n = 2^{-1} \log (p-1) = 1.10$ (BIC2p), $C_n = 3^{-1} \log (p-1) = 0.73$ (BIC3p), and $C_n = 1$ (BIC), as well as the ordinary Akaike Information Criterion (AIC)}. {\color{black}The simulation was carried out in R version 4.3.0 \citep{r2023} with an Intel Xeon E5-2609 2.40 GHz processor, using our own code opportunely adapted from the package \texttt{Qtools} \citep{Qtools} to fit QMGMs and the \texttt{mgm} package \citep{mgm} to fit MGMs.}

As a first {\color{black}aim}, we assess{\color{black}ed} how well {\color{black}each model} recover{\color{black}s} the true edges by reporting {\color{black}the} receiver operating characteristic (ROC) curves, {\color{black}with} the true positive rate (TPR) against the false positive rate (FPR) across the $R$ replicates. {\color{black}The results, averaged over the Monte Carlo replications, are shown in} Figure \ref{fig:sim_ROC}. {\color{black}One can observe that the areas under the ROC curves (AUCs) resulting from our proposed QMGMs are always greater  compared to that of the mean-based MGM, making our proposed method substantially more competitive than the only alternative model currently available for mixed graphs. The performance of QMGM improves with increasing number of quantile levels, thus with a finer coverage of the distribution, although the performance is essentially the same for QMGM7 and QMGM17. This means that the computational burden is no longer justified by the vanishing marginal gain. Summary statistics for the estimated AUCs and computational time for each model are given in Table \ref{tab:sim_AUC}.}

\begin{figure}[h!]
\begin{center}
\includegraphics[scale=.435]{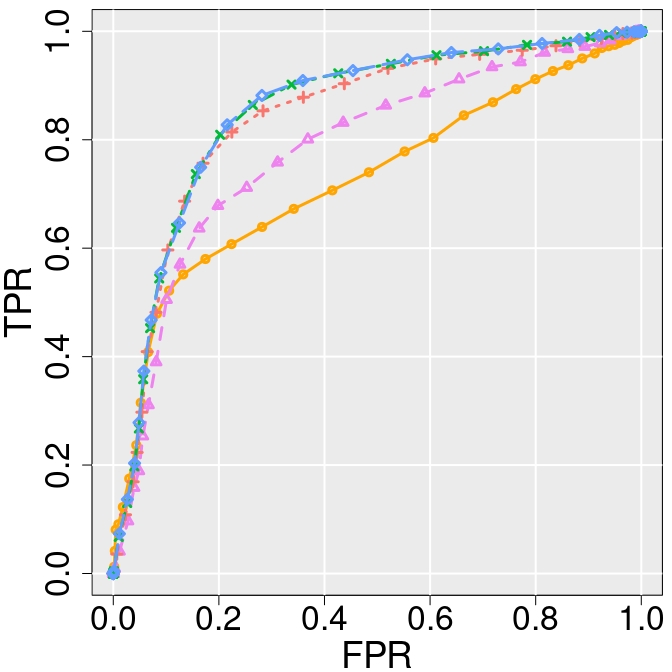}
\includegraphics[scale=.435]{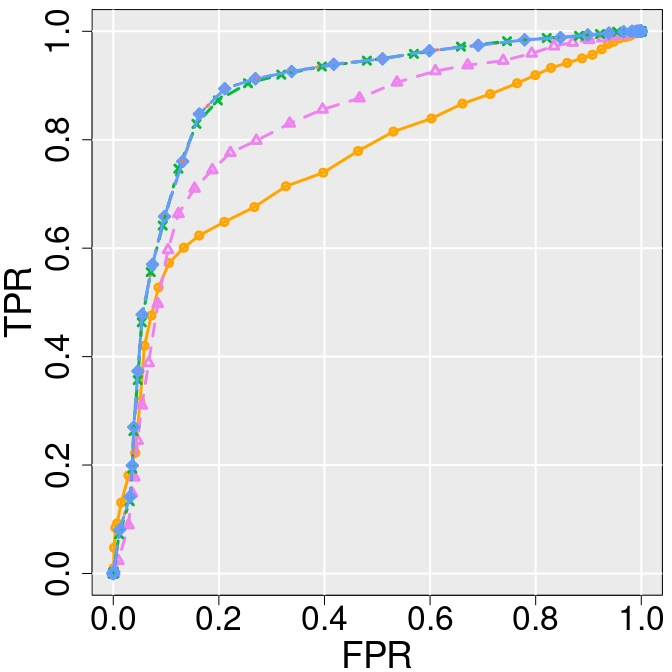}
\caption{{\color{black}Estimated ROC curves averaged over $R = 100$ replication when $n=500$ (left plot) and $n = 1000$ (right plot)} for the MGM (orange), QMGM1 (violet), QMGM3 (red), QMGM7 (green) and QMGM17 (blue).}
\label{fig:sim_ROC}
\end{center}
\end{figure}

\begin{table}
\centering
\resizebox{1\textwidth}{!}{%
\begin{tabular}{l r r r r r}
\toprule
$n$ & MGM & QMGM1 & QMGM3 & QMGM7 & QMGM17 \\
\hline
\multicolumn{6}{l}{Panel A: AUC}\\
$500$ & $0.71 \; [0.63; 0.78]$ & $0.79 \; [0.70; 0.85]$ & $0.84 \; [0.75; 0.88]$ & $0.86 \; [0.76; 0.91]$ & $0.86 \; [0.78; 0.91]$ \\
$1000$ & $0.75 \; [0.69; 0.82]$ & $0.83 \; [0.77; 0.88]$ & $0.88 \; [0.81; 0.92]$ & $0.89 \; [0.83; 0.92]$ & $0.89 \; [0.84; 0.93]$ \\
\\
\multicolumn{6}{l}{Panel B: Computational time {\color{black}(seconds)}}\\
$500$ & $0.92 \; [0.84; 1.18]$ & $4.63 \; [4.33; 5.30]$ & $8.13 \; [7.60; 9.08]$ & $14.65 \; [14.06; 15.70]$ & $30.85 \; [29.82; 32.47]$ \\
$1000$ & $0.97 \; [0.89; 1.64]$ & $107.14 \; [103.47; 113.44]$ & $115.29 \; [111.67; 121.89]$ & $132.37 \; [127.91; 137.88]$ & $173.85 \; [168.92; 179.47]$ \\
\bottomrule
\end{tabular}}
\caption{Median and{\color{black}, in brackets,} 10th and 90th percentile{\color{black}s} {\color{black}of the estimated} AUC values (Panel A) and computational time in seconds (Panel B) averaged over $R = 100$ Monte Carlo replications for the MGM, QMGM1, QMGM3, QMGM7 and QMGM17 models with sample sizes $n \in \{500, 1000\}$.}
\label{tab:sim_AUC}
\end{table}

{\color{black}The second aim of the simulation study was to evaluate the criterion in \eqref{eq:BIC}} to select the optimal sparsity parameter, $\lambda^\star${\color{black}, and thus to recover the true edge structure of the graph.} We {\color{black}therefore} calculate{\color{black}d} the precision, TPR, FPR, F1-score, Matthews correlation coefficient (MCC) and accuracy. {\color{black}In view of the AUC results discussed above, we excluded QMGM3 and QMGM17 from the comparison as they did not add much to the discussion that follows. Table \ref{tab:sim_selec} shows summary statistics for the selected performance measures. In general, significant edges were detected more often and more accurately from our models than from the MGM. However, the AIC was more prone to overfit, resulting in denser graphs at both sample sizes. In contrast, BIC, BIC2p and BIC3p, all achieved similar performance without being significantly affected by the more severe penalty on the complexity of the model. Lastly, BICp worked well in detecting the true non-zero connections overall and resisting against overfit.}

\begin{table}[h]
\centering
\resizebox{1\textwidth}{!}{%
\begin{tabular}{l c c c c c c}
\toprule
 & Precision & TPR & FPR & F1-score & MCC & Accuracy\\
\hline
\multicolumn{7}{l}{Panel A: $n = 500$}\\\\
MGM \\
AIC & $0.34 \; [0.28; 0.42]$ & $0.75 \; [0.67; 0.92]$ & $0.56 \; [0.36; 0.76]$ & $0.47 \; [0.40; 0.55]$ & $0.21 \; [0.05; 0.36]$ & $0.54 \; [0.40; 0.67]$ \\
BIC & $0.38 \; [0.30; 0.50]$ & $0.67 \; [0.58; 0.92]$ & $0.42 \; [0.27; 0.70]$ & $0.48 \; [0.41; 0.58]$ & $0.24 \; [0.09; 0.41]$ & $0.60 \; [0.44; 0.73]$ \\
BICp & $0.47 \; [0.35; 0.62]$ & $0.67 \; [0.50; 0.75]$ & $0.27 \; [0.15; 0.45]$ & $0.54 \; [0.44; 0.62]$ & $0.35 \; [0.17; 0.48]$ & $0.71 \; [0.60; 0.78]$ \\
BIC2p & $0.38 \; [0.30; 0.50]$ & $0.67 \; [0.58; 0.83]$ & $0.39 \; [0.24; 0.67]$ & $0.48 \; [0.41; 0.58]$ & $0.25 \; [0.11; 0.40]$ & $0.62 \; [0.47; 0.73]$ \\
BIC3p & $0.35 \; [0.29; 0.42]$ & $0.75 \; [0.67; 0.92]$ & $0.53 \; [0.33; 0.76]$ & $0.48 \; [0.40; 0.56]$ & $0.24 \; [0.08; 0.37]$ & $0.56 \; [0.40; 0.67]$ \\
QMGM1 \\
AIC & $0.40 \; [0.33; 0.50]$ & $0.83 \; [0.67; 1.00]$ & $0.42 \; [0.27; 0.64]$ & $0.53 \; [0.45; 0.62]$ & $0.34 \; [0.21; 0.47]$ & $0.62 \; [0.51; 0.73]$ \\
BIC & $0.50 \; [0.40; 0.60]$ & $0.75 \; [0.58; 0.83]$ & $0.27 \; [0.18; 0.39]$ & $0.58 \; [0.48; 0.67]$ & $0.40 \; [0.27; 0.53]$ & $0.73 \; [0.64; 0.78]$ \\
BICp & $0.57 \; [0.47; 0.75]$ & $0.67 \; [0.25; 0.83]$ & $0.18 \; [0.03; 0.28]$ & $0.59 \; [0.35; 0.71]$ & $0.43 \; [0.25; 0.60]$ & $0.78 \; [0.71; 0.84]$ \\
BIC2p & $0.50 \; [0.40; 0.60]$ & $0.75 \; [0.58; 0.83]$ & $0.27 \; [0.15; 0.39]$ & $0.58 \; [0.48; 0.67]$ & $0.40 \; [0.27; 0.53]$ & $0.73 \; [0.64; 0.80]$ \\
BIC3p & $0.47 \; [0.39; 0.57]$ & $0.75 \; [0.58; 0.92]$ & $0.30 \; [0.18; 0.48]$ & $0.57 \; [0.50; 0.65]$ & $0.39 \; [0.26; 0.50]$ & $0.71 \; [0.62; 0.78]$ \\
QMGM7 \\
AIC & $0.30 \; [0.27; 0.36]$ & $1.00 \; [0.92; 1.00]$ & $0.85 \; [0.61; 0.97]$ & $0.46 \; [0.42; 0.52]$ & $0.21 \; [0.09; 0.34]$ & $0.38 \; [0.29; 0.54]$ \\
BIC & $0.36 \; [0.31; 0.44]$ & $0.92 \; [0.92; 1.00]$ & $0.61 \; [0.42; 0.73]$ & $0.52 \; [0.46; 0.59]$ & $0.34 \; [0.20; 0.45]$ & $0.53 \; [0.44; 0.67]$ \\
BICp & $0.44 \; [0.38; 0.53]$ & $0.92 \; [0.83; 1.00]$ & $0.42 \; [0.24; 0.55]$ & $0.59 \; [0.49; 0.67]$ & $0.44 \; [0.29; 0.55]$ & $0.67 \; [0.58; 0.76]$ \\
BIC2p & $0.38 \; [0.32; 0.44]$ & $0.92 \; [0.91; 1.00]$ & $0.58 \; [0.42; 0.70]$ & $0.53 \; [0.47; 0.59]$ & $0.34 \; [0.23; 0.45]$ & $0.58 \; [0.47; 0.67]$ \\
BIC3p & $0.34 \; [0.30; 0.41]$ & $1.00 \; [0.92; 1.00]$ & $0.67 \; [0.51; 0.85]$ & $0.50 \; [0.44; 0.57]$ & $0.28 \; [0.16; 0.41]$ & $0.50 \; [0.38; 0.62]$ \\
\hline
\multicolumn{7}{l}{Panel B: $n = 1000$}\\\\
MGM \\
AIC & $0.33 \; [0.28; 0.44]$ & $0.83 \; [0.67; 1.00]$ & $0.62 \; [0.39; 0.85]$ & $0.48 \; [0.41; 0.56]$ & $0.21 \; [0.06; 0.39]$ & $0.50 \; [0.38; 0.67]$ \\
BIC & $0.37 \; [0.31; 0.50]$ & $0.75 \; [0.66; 0.92]$ & $0.47 \; [0.27; 0.70]$ & $0.49 \; [0.43; 0.58]$ & $0.25 \; [0.11; 0.44]$ & $0.58 \; [0.44; 0.73]$ \\
BICp & $0.46 \; [0.35; 0.62]$ & $0.67 \; [0.58; 0.83]$ & $0.30 \; [0.15; 0.48]$ & $0.56 \; [0.46; 0.64]$ & $0.38 \; [0.20; 0.50]$ & $0.70 \; [0.58; 0.80]$ \\
BIC2p & $0.38 \; [0.31; 0.50]$ & $0.75 \; [0.58; 0.92]$ & $0.45 \; [0.24; 0.67]$ & $0.49 \; [0.43; 0.59]$ & $0.26 \; [0.13; 0.44]$ & $0.60 \; [0.47; 0.73]$ \\
BIC3p & $0.33 \; [0.29; 0.44]$ & $0.83 \; [0.67; 0.93]$ & $0.58 \; [0.33; 0.79]$ & $0.48 \; [0.41; 0.56]$ & $0.23 \; [0.08; 0.39]$ & $0.51 \; [0.40; 0.67]$ \\
QMGM1 \\
AIC & $0.47 \; [0.36; 0.60]$ & $0.83 \; [0.55; 0.92]$ & $0.33 \; [0.15; 0.52]$ & $0.59 \; [0.42; 0.67]$ & $0.42 \; [0.27; 0.53]$ & $0.71 \; [0.58; 0.78]$ \\
BIC & $0.53 \; [0.42; 0.65]$ & $0.75 \; [0.25; 0.92]$ & $0.24 \; [0.06; 0.36]$ & $0.62 \; [0.38; 0.71]$ & $0.46 \; [0.30; 0.60]$ & $0.76 \; [0.69; 0.82]$ \\
BICp & $0.60 \; [0.48; 0.73]$ & $0.75 \; [0.25; 0.83]$ & $0.18 \; [0.06; 0.27]$ & $0.64 \; [0.38; 0.75]$ & $0.50 \; [0.30; 0.66]$ & $0.79 \; [0.71; 0.85]$ \\
BIC2p & $0.53 \; [0.43; 0.67]$ & $0.75 \; [0.25; 0.84]$ & $0.24 \; [0.06; 0.33]$ & $0.62 \; [0.38; 0.71]$ & $0.46 \; [0.30; 0.60]$ & $0.76 \; [0.69; 0.82]$ \\
BIC3p & $0.50 \; [0.41; 0.62]$ & $0.83 \; [0.48; 0.92]$ & $0.27 \; [0.12; 0.42]$ & $0.61 \; [0.43; 0.69]$ & $0.44 \; [0.27; 0.57]$ & $0.73 \; [0.66; 0.80]$ \\
QMGM7 \\
AIC & $0.32 \; [0.29; 0.46]$ & $1.00 \; [0.92; 1.00]$ & $0.73 \; [0.39; 0.88]$ & $0.48 \; [0.43; 0.56]$ & $0.27 \; [0.16; 0.43]$ & $0.44 \; [0.36; 0.69]$ \\
BIC & $0.39 \; [0.32; 0.50]$ & $0.92 \; [0.91; 1.00]$ & $0.56 \; [0.24; 0.70]$ & $0.53 \; [0.47; 0.62]$ & $0.37 \; [0.25; 0.50]$ & $0.58 \; [0.47; 0.74]$ \\
BICp & $0.44 \; [0.38; 0.60]$ & $0.92 \; [0.57; 1.00]$ & $0.42 \; [0.12; 0.55]$ & $0.59 \; [0.50; 0.67]$ & $0.44 \; [0.34; 0.56]$ & $0.67 \; [0.58; 0.78]$ \\
BIC2p & $0.39 \; [0.33; 0.52]$ & $0.92 \; [0.83; 1.00]$ & $0.55 \; [0.24; 0.70]$ & $0.55 \; [0.47; 0.62]$ & $0.37 \; [0.25; 0.51]$ & $0.60 \; [0.49; 0.76]$ \\
BIC3p & $0.37 \; [0.32; 0.50]$ & $0.92 \; [0.92; 1.00]$ & $0.61 \; [0.33; 0.76]$ & $0.52 \; [0.47; 0.62]$ & $0.34 \; [0.24; 0.49]$ & $0.56 \; [0.44; 0.73]$ \\
\bottomrule
\end{tabular}}
\caption{Median and{\color{black}, in brackets} 10th and 90th percentile{\color{black}s} of the MGM, QMGM1 and QMGM7 edge recovery performance over $R = 100$ Monte Carlo simulations with sample size $n = 500$ (Panel A) and $n = 1000$ (Panel B).}
\label{tab:sim_selec}
\end{table}

{\color{black}We conducted an additional simulation study to test the sensitivity of our results to the presence of binary variables, which is an extreme form of discreteness when it comes to quantiles}. Specifically, we replace{\color{black}d} the conditional distributions of $Y_7$ and $Y_8$ with the following conditional models:
\begin{align*}
F^{-1}_{Y_7} (u_7 \mid {\bs Y}_{\neg 7} &= {\bs y}_{\neg 7}) = F^{-1}_{Ber} (u_7, 1/\{ 1+\exp [ -2-\mid y_3 + 5 \mid^{-1/2} + \mid \log(\mid y_5 \mid + 1) \mid ] \} ) \\
F^{-1}_{Y_{10}} (u_{10} \mid {\bs Y}_{\neg 10} &= {\bs y}_{\neg 10}) = F^{-1}_{Ber} (u_{10}, 1/\{ \exp [ -3-0.8 u_{10} \log (\mid y_9 + 0.1 \mid) ] \} ),
\end{align*}
where $F^{-1}_{Ber}$ denotes the quantile function of a Bernoulli distribution. For {\color{black}the} binary nodes, we {\color{black}used the logistic transformation $g(t) = \log(t) - \log(1-t)$}. {\color{black}The results, shown in Figure \ref{fig:sim_ROC_bin} for $n=1000$, confirm what we observed in the main simulation study, except that this time the superiority of QMGM1 relative to MGM was not consistent throughout. Note that the mid-median estimator is equivalent to the probability estimator of a binomial regression \citep[see Section 2.2 in][]{geraci2022mid}, therefore the same used in MGM. However, the mid-quantile estimate is obtained by a two-pronged algorithm which increases the estimation variability compared to the GLM estimator and likely explains the loss of performance in a single-quantile QMGM. Hence, we recommend using multiple quantile levels in the presence of binary variables.}

\begin{figure}[h!]
\begin{center}
\includegraphics[scale=.435]{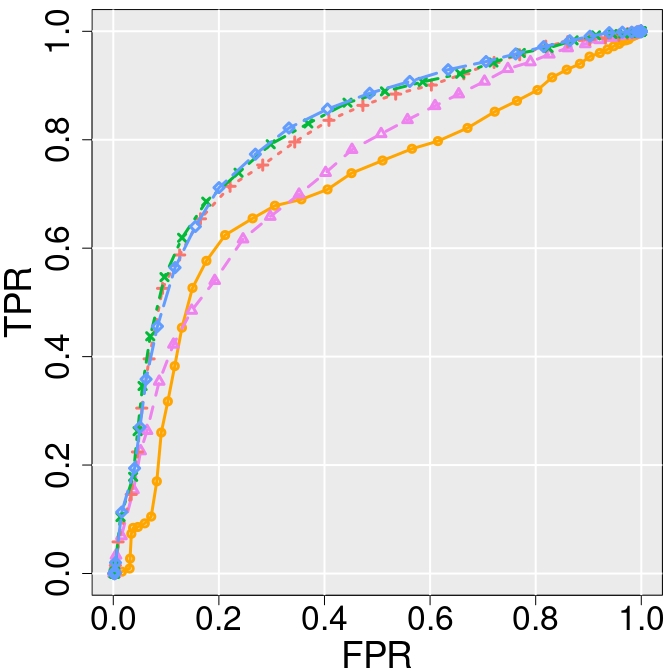}
\caption{{\color{black}Estimated ROC curves averaged over $R = 100$ replications when $n = 1000$ in a scenario with binary variables for the MGM (orange), QMGM1 (violet), QMGM3 (red), QMGM7 (green) and QMGM17 (blue).}}
\label{fig:sim_ROC_bin}
\end{center}
\end{figure}

\section{Application}\label{sec:app}
In this section, we {\color{black}present the results of the analysis of} $n = 188$ MPSs that {\color{black}took} place in the US {\color{black}between} 1966 {\color{black}and} 2022 using the Violence Project data introduced in Section \ref{sec:data}. In the network, we include{\color{black}d the} $p=14$ variables listed in Table \ref{tab:summary}. {\color{black}Before carrying out the analysis, we imputed missing values ($0.65\%$ of the whole dataset) using a $k$-nearest neighbour algorithm where each missing value was substituted by the median of its $k = 13$ closest complete cases.

Figure \ref{fig:dist} depicts the number of fatalities (left) and shootings (right) by year, along with a five year rolling average. The two plots reveal occasional spikes, with an underlying increasing trend for both fatalities and shootings. Descriptive statistics of continuous and discrete variables considered in our analysis are provided in Table \ref{tab:summary}. Perpetrators were aged 33 years on average and 14\% of them had an immigrant status at the time of the shooting. Importantly, many assailants had a connection with the place where the shooting occurred and the majority were not in a relationship/marriage. Many mass shooters were in a noticeable crisis prior to committing the crime and presented some degree of mental health disorder. Moreover, these data suggest that fame-seeking perpetrators are also common, which supports the theory of social contagion among mass killers. As far as the shape of the distributions goes, the number of deaths, injured people and firearms brought to the scene, show significant positive skewness and exhibit outlying values. {\color{black} In this context, a quantile graphical model might prove to be more effective at picking up relevant features of the shootings and of the mass shooters.} 

\begin{figure}[h!]
\begin{center}
\includegraphics[scale=.3]{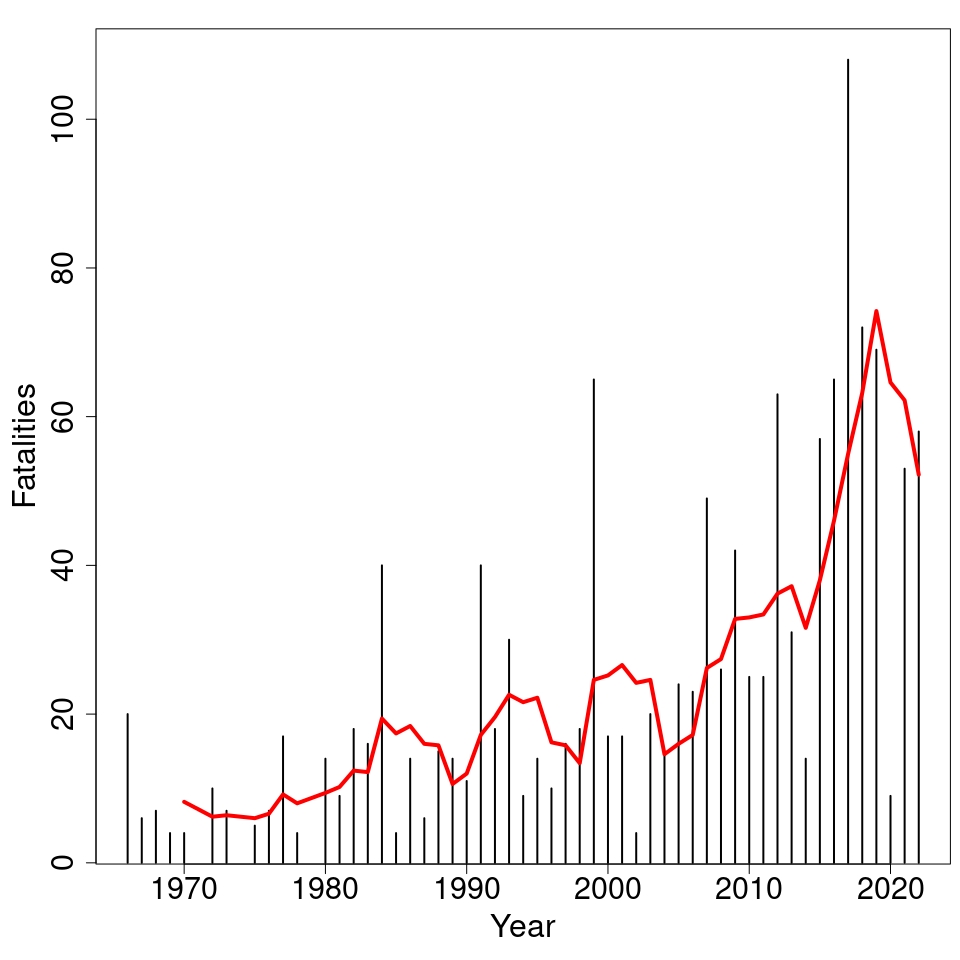}
\includegraphics[scale=.3]{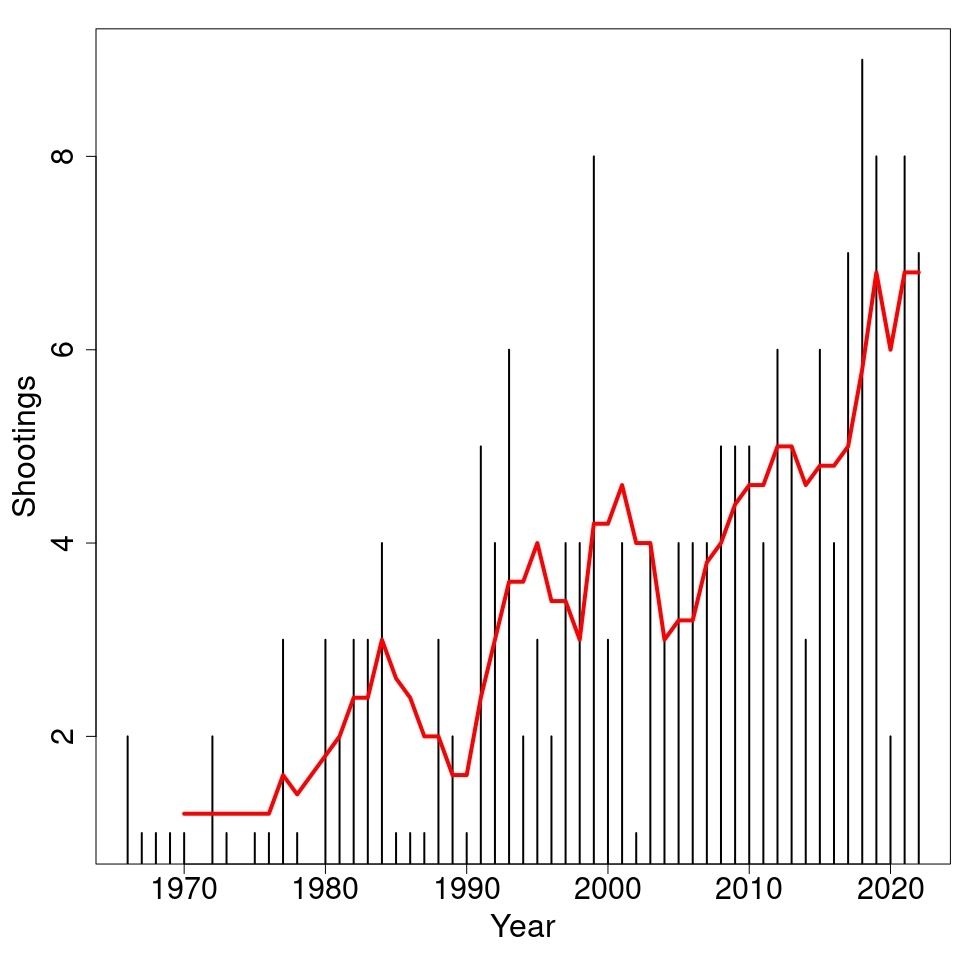}
\caption{Left: number of fatalities per year. Right: number of shootings per year. The red lines denote the five year rolling averages.}
\label{fig:dist}
\end{center}
\end{figure}

Supported by the encouraging results of the simulation study, we therefore applied QMGM to characterize the network of relationships among the selected mass shooting variables and identify those that may better inform prevention strategies.} {\color{black} In particular, we fitted the octile QMGM for a sequence $\lambda \in \exp(\{\log(0.001), \dots, \log(5)\})$ of 100 equispaced values. Prior to fitting, continuous variables were standardized to have zero mean and unit variance. We the identity and logarithmic link functions for continuous and count variables, respectively, and the logistic transformation for binary nodes. The edge set $\widehat{E}$ was estimated as described in \eqref{eq:edgeset_est}. Based on the simulation results, we opted for the standard BIC which corresponds to expression \eqref{eq:BIC} with $C_n = 1$ (we investigated other penalties with $C_n = \log (p-1)$, $2^{-1} \log (p-1)$ and $3^{-1} \log (p-1)$ too, but these produced excessive overfit). We compared our model with the mean-based MGM \citep{yang2014mixed} by fitting LASSO penalized GLMs on each node.}

{\color{black} Figure \ref{fig:graph} provides a representation of the QMGM (top) and MGM (bottom) estimated graphs. The width of the edges is proportional to the absolute value of the strength of the interaction while the edge color reflects the sign of the interaction (green for positive, red for negative, and grey if undefined). The colors of the nodes map to the different domains of the mass shooting variables, namely \emph{shooting} (variables pertaining to the shooting), \emph{characteristics} (socio-demographic characteristics of the shooter and victims) and \emph{background} (variables pertaining to the shooter's background).

We first comment on the network characteristics of the graphs. The graph produced by QMGM appears to be denser than the one produced by MGM with 60 edges estimated by the former and 41 by the latter. The normalized Hamming distance \citep{banks1994metric} between the two graphs, which is a measure of the structural similarities across networks, was equal to 0.363, translating into markedly different graphs. Figure \ref{fig:measures} reports local centrality measures including the degree, betweenness and closeness centrality associated with the two models. QMGM had generally a higher degree and closeness than MGM, which means that the former reveals a stronger interconnection of the graph. The degree measure indicates that the variables age, injured, mental and social are the most connected variables in the network. On the other hand, both betweenness and centrality measures show that the variables injured and social are the most central nodes.
}

{\color{black}As we delve in the specifics of the graph estimated by QMGM (Figure \ref{fig:graph}), we can see that} several shooter characteristics, including the relationship status, personal connection with the place where the shooting took place, crisis and mental health indicators are connected by non-zero edges. We found strong positive conditional dependence relations between past traumas, symptoms of psychosis, mental illness and grievances. Mass shooters seem to share childhood trauma, be{\color{black}ing} in a state of crisis prior to committing their shootings and {\color{black}having committed an act} born out of some motivation/grievance against a specific place {\color{black}or} group of people to blame. In particular, communicating the intent to harm or expressing their motives can be seen both as a call for help and a search for fame and notoriety for their actions. {\color{black}Our} findings are consistent with previous studies \citep{swanson2015mental, metzl2021mental} {\color{black}that} point out that the mental health of the mass shooter is an important factor, although its complexity cannot be easily untangled based on limited information on the mass shooter's psychological profile. 
 The majority of perpetrators {\color{black}had} {\color{black}a pre-}existing, personal relationship to the shooting site {\color{black}(e.g., shooters are often current or former students in school-related shootings, or employees in case of workplaces)}. Among the demographic characteristics, {\color{black}not} being {\color{black}in a relationship was} positively connected with age and the crime, social and mental domains. Shooter's and victims' ages {\color{black}were} positively connected. Though {\color{black}age varies} by shooting location, when {\color{black}a shooter targets} people or a place they know, {\color{black}the victims} tend to be of similar age. The number of {\color{black}victims} (deaths and injured people) and the number of firearms brought to the scene {\color{black}were} positively connected as expected. Most notably, they present non-zero edge{\color{black}s} with the social and motivation {\color{black}variables as well with} the dummy variable \emph{insider}. These {\color{black}findings} likely reflect a {\color{black}number} of concerning behaviors, such as a detailed planning and preparation prior {\color{black}to} the attack, the study of other shootings and fame-seeking motivation \citep{peterson2021communication}.

{\color{black}We conclude with an analysis (results not shown) to assess the sensitivity of the results to different number of quantile levels. Specifically, we considered the QMGMs with $L = 1$ and $L = 17$ values of $\tau$ as in Section \ref{sec:sim}. The number of non-zero edges was 54 for $L=1$ and 62 for $L = 17$, thus indicating that the network density stabilizes with increasing number of $\tau$'s. Moreover, the Hamming distance between the models with $L = 7$ used for the main analysis and the larger model with $L = 17$ was about 0.02, which further supports the robustness of the conclusions.}

\begin{figure}[h!]
\begin{center}
\centering
\includegraphics[scale=.15]{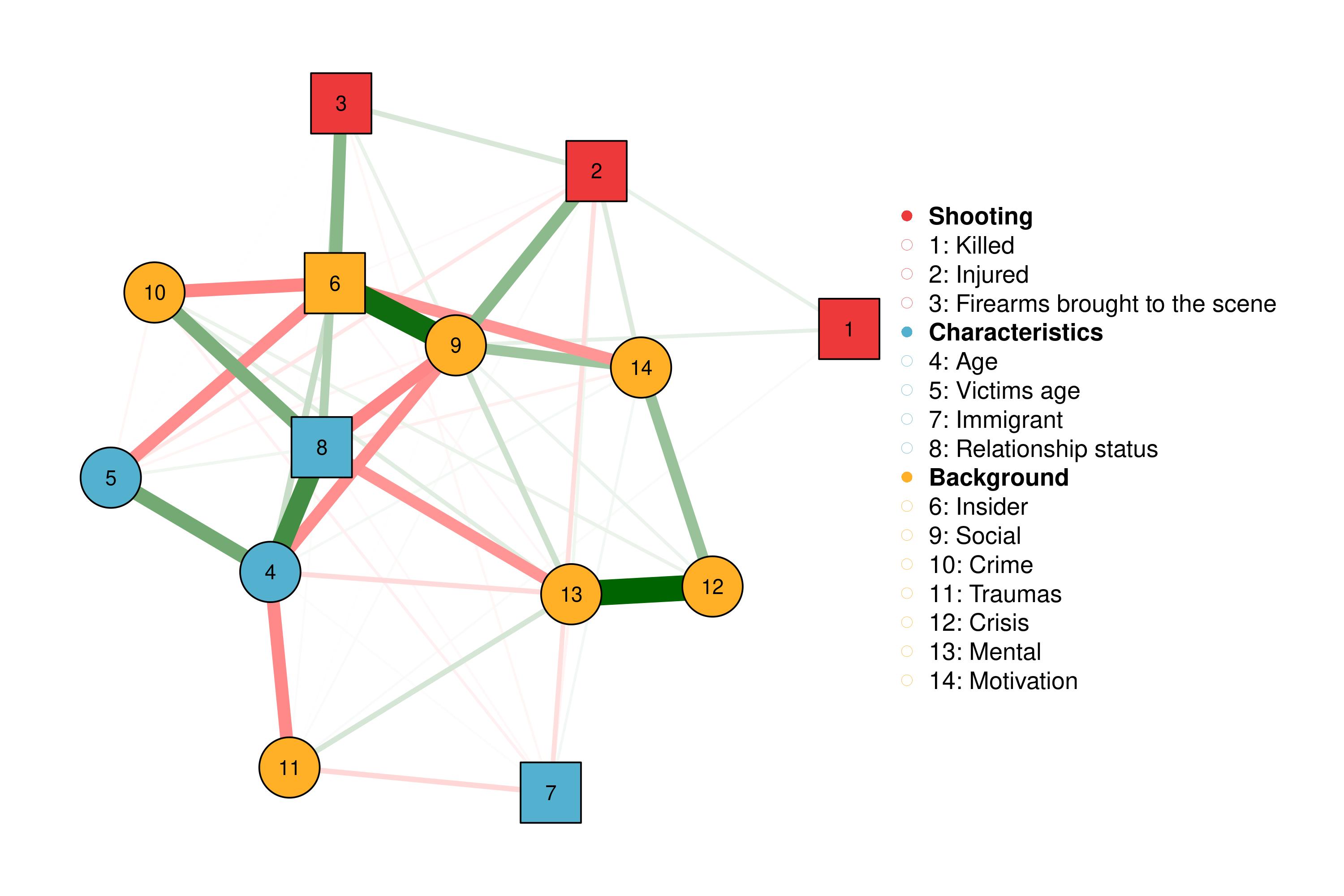}\\
\includegraphics[scale=.15]{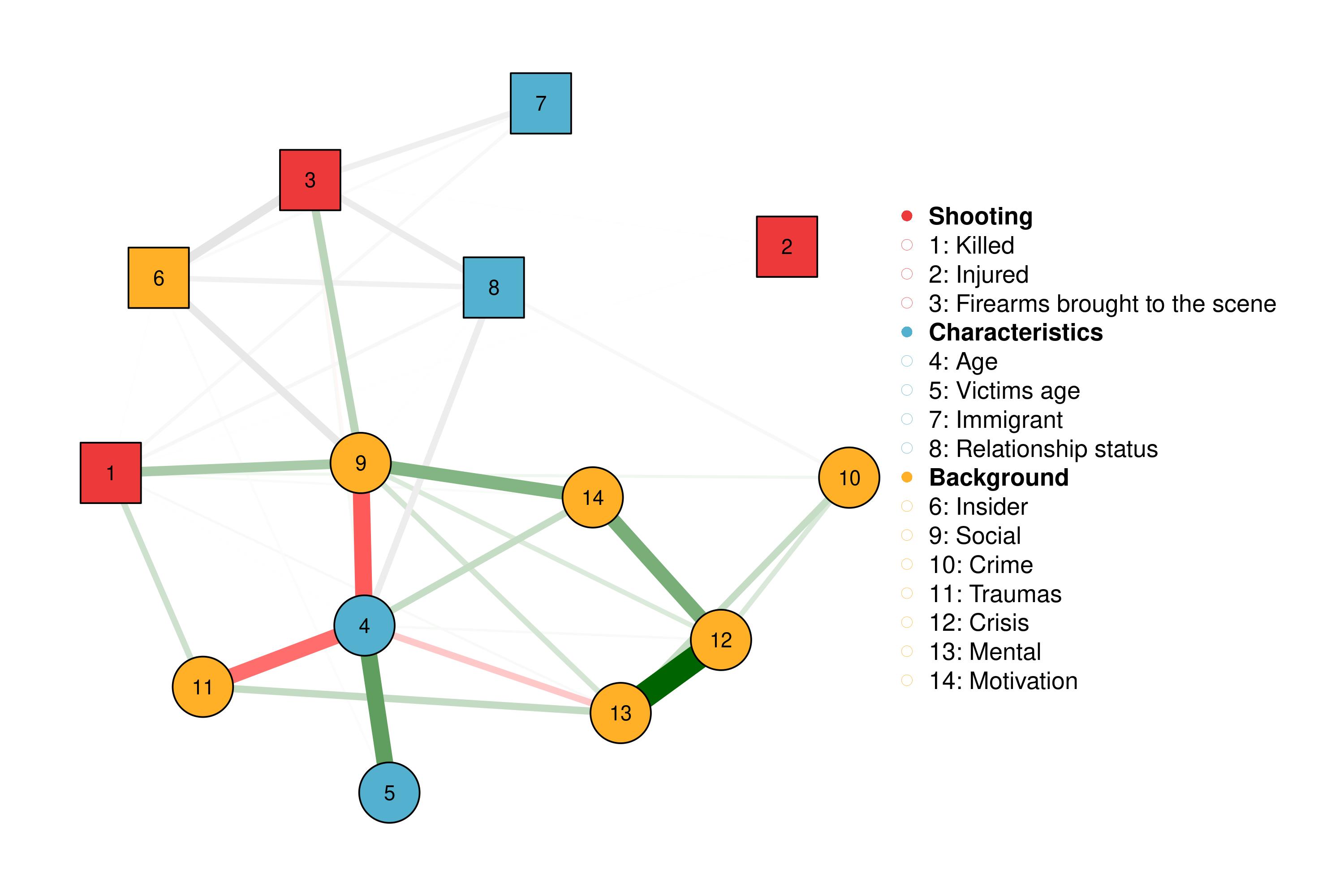}
\caption{Estimated network structures from QMGM7 (top) and MGM (bottom). Green edges in the networks depict positive interactions, red edges represent negative interactions, and thicker/more saturated edges depict stronger interactions.}
\label{fig:graph}
\end{center}
\end{figure}

\begin{figure}[h!]
\begin{center}
\centering
\includegraphics[scale=.425]{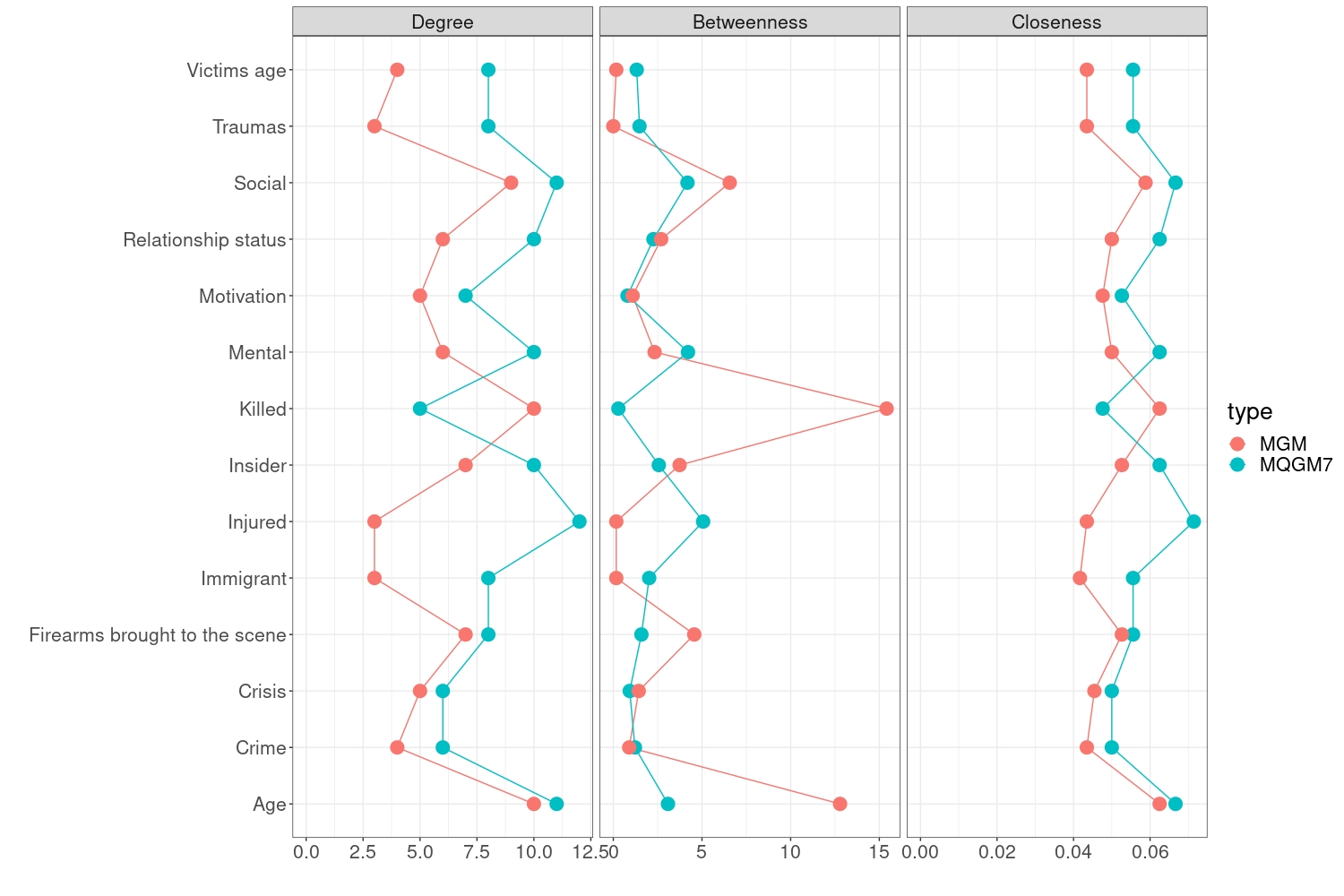}
\caption{Local centrality measures for the MGM (red) and QMGM7 (blue) estimated networks.}
\label{fig:measures}
\end{center}
\end{figure}


\section{Conclusions}\label{sec:con}
Mass shootings are a complex and multifaceted problem that tragically affect {\color{black}people's lives in many countries and, especially,} the US, with incidents happening in schools, churches, movie theaters, workplaces and {\color{black}other public} places. These violent events not only claim lives but also leave lasting emotional and psychological scars on the survivors, the families of the victims and the entire society. Addressing the factors that contribute to these {\color{black}heinous acts} {\color{black}might} help {\color{black}prevent or lessen} the lethality of future mass shootings. {\color{black} Our findings lead to the notion that gun crimes, mental illnesses, social networks, and past traumas are all complex factors that contribute to mass shootings in the US. The {\color{black}graph estimated using our methods suggests} that the connections between the considered entities and gun violence are {\color{black}more} complex and multifaceted {\color{black}than what a simpler mean-based graph model leads us to believe.} This {\color{black}important result} demand{\color{black}s for} a multi-sectoral approach for analysis and {\color{black}investigation as the psychological, sociological and criminal law perspectives would suggest \citep{peterson2021violence}}. Our results are in line with previous studies, supporting the crucial role of individual circumstances such as the characteristics of the person and their life experiences. From a policy perspective, the proposed graph-based framework is also designed as a visualization tool to support decision making. In {\color{black}the fight against} mass shootings, it becomes necessary the involvement of governments, communities, law enforcement agencies, health professionals, and {\color{black}civilians}. Possible strategies include stricter gun control regulations with background checks that can help prevent individuals with malicious intent or mental health issues from obtaining {\color{black}firearms with ease}. Institutions and authorities can seek to facilitate mental health support and mental health awareness campaigns to potentially prevent individuals from resorting to violence. Lastly, the development of adequate reporting systems to improve the quality of mass shootings data for research and data analysis {\color{black}is fundamental}.
}

In this paper, {\color{black}building on \citeauthor{geraci2022mid}'s (\citeyear{geraci2022mid}) mid-quantile regression models,} we develop{\color{black}ed} a mixed graphical model for identifying conditional independence relations between continuous and discrete variables in a quantile framework using Parzen's definition of mid-quantile \citep{parzen1993change}. {\color{black} The proposed network allows us to infer dependence relations that cannot be fully captured by using conditional mean only, by offering a complete characterization of the conditional distributions of the network variables. To recover the graph structure and identify only the most relevant risk factors,}
 we consider a neighborhood selection approach \citep{meinshausen2006high} in which conditional mid-quantiles of each variable in the network are modeled as a sparse function of all others. Graph structure estimation is divided in two steps. We first apply logistic regression to estimate semi-parametrically conditional mid-probabilities of each node. Then, in the second-step LASSO penalized linear mid-quantile regressions are fitted separately on each node of the network over a finite grid of ordered quantile levels. The proposed {\color{black}quantile} methods offer a robust graphical model for {\color{black} a wide range of scenarios such as multimodality, skewness and heavy tailedness,} as well as a{\color{black}n} easily implementable estimation procedure by exploiting traditional tools for regularized regression analysis.

{\color{black}Our methods} can be extended in several directions. First, while the use of a single penalty parameter controlling the overall amount of shrinkage in the network {\color{black}was dictated by parsimony}, one {\color{black}could apply} a separate penalty {\color{black}to edges of different types}. Second, {\color{black}although} the data {\color{black}revealed} a steep rise of mass shootings {\color{black}over time, we neglected the temporal dimension at our disposal}. {\color{black}A generalization of the proposed QMGM to the dynamic framework may offer useful clues to the} understanding {\color{black}of} how the relationships within the network evolve over time {\color{black}and, thus,} {\color{black}provide the basis for more effective interventions}. {\color{black}Finally, the analysis of MPSs data might benefit from the inclusion of demographic, social, ethnic and economic indicators of the location or geographical area where the shootings occurred \citep{ghio}.}

\bibliographystyle{agsm}
\bibliography{Manuscript}

@article{ghio,
   author = {Ghio, Michael and Simpson, John Tyler and Ali, Ayman and Fleckman, Julia M. and Theall, Katherine P. and Constans, Joseph I. and Tatum, Danielle and McGrew, Patrick R. and Duchesne, Juan and Taghavi, Sharven},
   title = {Association between markers of structural racism and mass shooting events in major {US} cities},
   journal = {JAMA Surgery},
   note = {DOI:10.1001/jamasurg.2023.2846},
   year = {2023}
}

@Article{Qtools,
  author = {Geraci, M.},
  title = {Qtools: A collection of models and tools for quantile inference},
  journal = {The R Journal},
  volume = {8},
  number = {2},
  pages = {117-138},
  year = {2016},
}

@Article{mgm,
title = {{mgm}: Estimating time-varying mixed graphical models in high-dimensional data},
author = {Haslbeck, J. M. B. and Waldorp, L. J.},
journal = {Journal of Statistical Software},
year = {2020},
volume = {93},
number = {8},
pages = {1-46}
}

@manual{CRS,
title = {Mass murder with firearms: Incidents and victims, 1999--2013},
author = {Krouse, W. J. and Richardson, D. J.},
year = {2015},
note = {Congressional Research Service Report. Available at \url{https://sgp.fas.org/crs}}
}

@article{geraci2022mid,
  title={Mid-quantile regression for discrete responses},
  author={Geraci, Marco and Farcomeni, Alessio},
  journal={Statistical Methods in Medical Research},
  volume={31},
  number={5},
  pages={821--838},
  year={2022},
  publisher={SAGE Publications Sage UK: London, England}
}

@article{li2008nonparametric,
  title={Nonparametric estimation of conditional {CDF} and quantile functions with mixed categorical and continuous data},
  author={Li, Qi and Racine, Jeffrey S},
  journal={Journal of Business \& Economic Statistics},
  volume={26},
  number={4},
  pages={423--434},
  year={2008},
  publisher={Taylor \& Francis}
}

@article{meinshausen2006high,
  title={High-dimensional graphs and variable selection with the {L}asso},
  author={Meinshausen, Nicolai and B{\"u}hlmann, Peter},
  journal={The Annals of Statistics},
  volume={34},
  number={3},
  pages={1436--1462},
  year={2006},
  publisher={Institute of Mathematical Statistics}
}

@article{chun2016graphical,
  title={Graphical models via joint quantile regression with component selection},
  author={Chun, Hyonho and Lee, Myung Hee and Fleet, James C and Oh, Ji Hwan},
  journal={Journal of Multivariate Analysis},
  volume={152},
  pages={162--171},
  year={2016},
  publisher={Elsevier}
}

@article{ali2016multiple,
  title={The multiple quantile graphical model},
  author={Ali, Alnur and Kolter, J Zico and Tibshirani, Ryan J},
  journal={Advances in Neural Information Processing Systems},
  volume={29},
  year={2016}
}

@article{tibshirani1996regression,
  title={Regression shrinkage and selection via the {L}asso},
  author={Tibshirani, Robert},
  journal={Journal of the Royal Statistical Society: Series B (Methodological)},
  volume={58},
  number={1},
  pages={267--288},
  year={1996},
  publisher={Wiley Online Library}
}

@article{parzen1993change,
  title={Change {PP} plot and continous sample quantile function},
  author={Parzen, Emanuel},
  journal={Optimization},
  volume={22},
  number={12},
  pages={3287--3304},
  year={1993},
  publisher={Taylor \& Francis}
}

@article{parzen2004quantile,
  title={Quantile probability and statistical data modeling},
  author={Parzen, Emanuel},
  journal={Statistical Science},
  pages={652--662},
  year={2004},
  publisher={JSTOR}
}

@book{lauritzen1996graphical,
  title={Graphical models},
  author={Lauritzen, Steffen L},
  volume={17},
  year={1996},
  publisher={Clarendon Press}
}

@inproceedings{yang2014mixed,
  title={Mixed graphical models via exponential families},
  author={Yang, Eunho and Baker, Yulia and Ravikumar, Pradeep and Allen, Genevera and Liu, Zhandong},
  booktitle={Artificial intelligence and statistics},
  pages={1042--1050},
  year={2014},
  organization={PMLR}
}

@article{lee2015learning,
  title={Learning the structure of mixed graphical models},
  author={Lee, Jason D and Hastie, Trevor J},
  journal={Journal of Computational and Graphical Statistics},
  volume={24},
  number={1},
  pages={230--253},
  year={2015},
  publisher={Taylor \& Francis}
}

@article{cheng2017high,
  title={High-dimensional mixed graphical models},
  author={Cheng, Jie and Li, Tianxi and Levina, Elizaveta and Zhu, Ji},
  journal={Journal of Computational and Graphical Statistics},
  volume={26},
  number={2},
  pages={367--378},
  year={2017},
  publisher={Taylor \& Francis}
}

@article{chen2015selection,
  title={Selection and estimation for mixed graphical models},
  author={Chen, Shizhe and Witten, Daniela M and Shojaie, Ali},
  journal={Biometrika},
  volume={102},
  number={1},
  pages={47--64},
  year={2015},
  publisher={Oxford University Press}
}

@article{lauritzen1989mixed,
  title={Mixed graphical association models},
  author={Lauritzen, Steffen L and Andersen, Anders Holst and Edwards, David and J{\"o}reskog, Karl G and Johansen, S{\o}ren},
  journal={Scandinavian Journal of Statistics},
  pages={273--306},
  year={1989},
  publisher={JSTOR}
}

@article{machado2005quantiles,
  title={Quantiles for counts},
  author={Machado, Jos{\'e} A F and Silva, JMC Santos},
  journal={Journal of the American Statistical Association},
  volume={100},
  number={472},
  pages={1226--1237},
  year={2005},
  publisher={Taylor \& Francis}
}

@article{koenker1978regression,
  title={Regression {Q}uantiles},
  author={Koenker, Roger and Bassett, Gilbert},
  journal={Econometrica: Journal of the Econometric Society},
  volume = {46},
  number = {1},
  pages={33--50},
  year={1978},
  publisher={JSTOR}
}

@article{ma2011asymptotic,
  title={Asymptotic properties of sample quantiles of discrete distributions},
  author={Ma, Yanyuan and Genton, Marc G and Parzen, Emanuel},
  journal={Annals of the Institute of Statistical Mathematics},
  volume={63},
  number={2},
  pages={227--243},
  year={2011},
  publisher={Springer}
}

@book{peterson2021violence,
  title={The violence project: How to stop a mass shooting epidemic},
  author={Peterson, Jillian and Densley, James},
  year={2021},
  publisher={Abrams}
}

@article{fox1998multiple,
  title={Multiple homicide: Patterns of serial and mass murder},
  author={Fox, James Alan and Levin, Jack},
  journal={Crime and Justice},
  volume={23},
  pages={407--455},
  year={1998},
  publisher={University of Chicago Press}
}

@article{fox2014mass,
  title={Mass shootings in {A}merica: moving beyond Newtown},
  author={Fox, James Alan and DeLateur, Monica J},
  journal={Homicide Studies},
  volume={18},
  number={1},
  pages={125--145},
  year={2014},
  publisher={Sage Publications Sage CA: Los Angeles, CA}
}

@article{metzl2015mental,
  title={Mental illness, mass shootings, and the politics of {A}merican firearms},
  author={Metzl, Jonathan M and MacLeish, Kenneth T},
  journal={American Journal of Public Health},
  volume={105},
  number={2},
  pages={240--249},
  year={2015},
  publisher={American Public Health Association}
}

@article{fox2003mass,
  title={Mass murder: An analysis of extreme violence},
  author={Fox, James Alan and Levin, Jack},
  journal={Journal of Applied Psychoanalytic Studies},
  volume={5},
  pages={47--64},
  year={2003},
  publisher={Springer}
}

@article{metzl2021mental,
  title={Mental illness, mass shootings, and the future of psychiatric research into {A}merican gun violence},
  author={Metzl, Jonathan M and Piemonte, Jennifer and McKay, Tara},
  journal={Harvard Review of Psychiatry},
  volume={29},
  number={1},
  pages={81},
  year={2021},
  publisher={Wolters Kluwer Health}
}

@article{hirschtritt2018reassessment,
  title={A reassessment of blaming mass shootings on mental illness},
  author={Hirschtritt, Matthew E and Binder, Renee L},
  journal={JAMA Psychiatry},
  volume={75},
  number={4},
  pages={311--312},
  year={2018},
  publisher={American Medical Association}
}

@article{peterson2021communication,
  title={Communication of intent to do harm preceding mass public shootings in the {U}nited {S}tates, 1966 to 2019},
  author={Peterson, Jillian and Erickson, Gina and Knapp, Kyle and Densley, James},
  journal={JAMA network open},
  volume={4},
  number={11},
  year={2021},
  publisher={American Medical Association}
}

@article{peterson2022psychosis,
  title={Psychosis and mass shootings: a systematic examination using publicly available data.},
  author={Peterson, Jillian K and Densley, James A and Knapp, Kyle and Higgins, Stasia and Jensen, Amanda},
  journal={Psychology, Public Policy, and Law},
  volume={28},
  number={2},
  pages={280},
  year={2022},
  publisher={American Psychological Association}
}

@article{hoops2021consensus,
  title={Consensus-driven priorities for firearm injury education among medical professionals},
  author={Hoops, Katherine and Fahimi, Jahan and Khoeur, Lina and Studenmund, Christine and Barber, Catherine and Barnhorst, Amy and Betz, Marian E and Crifasi, Cassandra K and Davis, John A and Dewispelaere, William and others},
  journal={Academic Medicine},
  volume={97},
  number={1},
  pages={93--104},
  year={2021},
  publisher={Wolters Kluwer}
}

@article{jewett2022us,
  title={{US} Mass public shootings since {C}olumbine: victims per incident by race and ethnicity of the perpetrator},
  author={Jewett, Patricia I and Gangnon, Ronald E and Borowsky, Iris W and Peterson, Jillian and Areba, Eunice M and Kiragu, Andrew and Densley, James},
  journal={Preventive Medicine},
  volume={162},
  pages={107176},
  year={2022},
  publisher={Elsevier}
}

@article{kordas2006smoothed,
  title={Smoothed binary regression quantiles},
  author={Kordas, Gregory},
  journal={Journal of Applied Econometrics},
  volume={21},
  number={3},
  pages={387--407},
  year={2006},
  publisher={Wiley Online Library}
}

@article{swanson2015mental,
  title={Mental illness and reduction of gun violence and suicide: bringing epidemiologic research to policy},
  author={Swanson, Jeffrey W and McGinty, E Elizabeth and Fazel, Seena and Mays, Vickie M},
  journal={Annals of Epidemiology},
  volume={25},
  number={5},
  pages={366--376},
  year={2015},
  publisher={Elsevier}
}

@article{r2023,
  title={R: A language and environment for statistical computing. {R} Foundation for Statistical Computing, {V}ienna, {A}ustria},
  author={{R Core Team}},
  note={http://www.R-project.org},
  year={2023}
}

@article{banks1994metric,
  title={Metric inference for social networks},
  author={Banks, David and Carley, Kathleen},
  journal={Journal of Classification},
  volume={11},
  pages={121--149},
  year={1994},
  publisher={Springer}
}

@unpublished{stoffi2023,
  title={Distance Between Schools and Gun Retailers and Risk of School Gun Incidents in the {U}nited {S}tates},
  author={Bargagli Stoffi, Falco J and Qin, Michelle M and Fairbank, Nate and Bennett, Lauren and Butler, Kevin and Braun, Danielle and Dominici, Francesca},
  year={2023}
}

@article{reeping2019state,
  title={State gun laws, gun ownership, and mass shootings in the {US}: cross sectional time series},
  author={Reeping, Paul M and Cerd{\'a}, Magdalena and Kalesan, Bindu and Wiebe, Douglas J and Galea, Sandro and Branas, Charles C},
  journal={BMJ},
  volume={364},
  year={2019},
  publisher={British Medical Journal Publishing Group}
}

@article{lin2018have,
  title={What have we learned from the time trend of mass shootings in the {US}?},
  author={Lin, Ping-I and Fei, Lin and Barzman, Drew and Hossain, M},
  journal={PLoS ONE},
  volume={13},
  number={10},
  pages={e0204722},
  year={2018},
  publisher={Public Library of Science}
}

@article{langman2009rampage,
  title={Rampage school shooters: A typology},
  author={Langman, Peter},
  journal={Aggression and Violent Behavior},
  volume={14},
  number={1},
  pages={79--86},
  year={2009},
  publisher={Elsevier}
}

@article{peterson2021presence,
  title={Presence of armed school officials and fatal and nonfatal gunshot injuries during mass school shootings, {U}nited {S}tates, 1980-2019},
  author={Peterson, Jillian and Densley, James and Erickson, Gina},
  journal={JAMA Network Open},
  volume={4},
  number={2},
  year={2021},
  publisher={American Medical Association}
}

@article{lankford2020have,
  title={Why have public mass shootings become more deadly? {A}ssessing how perpetrators’ motives and methods have changed over time},
  author={Lankford, Adam and Silver, James},
  journal={Criminology \& Public Policy},
  volume={19},
  number={1},
  pages={37--60},
  year={2020},
  publisher={Wiley Online Library}
}

@article{katsiyannis2018historical,
  title={Historical examination of {U}nited {S}tates intentional mass school shootings in the 20th and 21st centuries: Implications for students, schools, and society},
  author={Katsiyannis, Antonis and Whitford, Denise K and Ennis, Robin Parks},
  journal={Journal of Child and Family Studies},
  volume={27},
  pages={2562--2573},
  year={2018},
  publisher={Springer}
}

@article{capellan2021investigation,
  title={An investigation of mass public shooting attacks against government targets in the {U}nited {S}tates},
  author={Capellan, Joel A and Silva, Jason R},
  journal={Studies in Conflict \& Terrorism},
  volume={44},
  number={5},
  pages={387--409},
  year={2021},
  publisher={Taylor \& Francis}
}

@article{foresi1995conditional,
  title={The conditional distribution of excess returns: An empirical analysis},
  author={Foresi, Silverio and Peracchi, Franco},
  journal={Journal of the American Statistical Association},
  volume={90},
  number={430},
  pages={451--466},
  year={1995},
  publisher={Taylor \& Francis}
}

@article{peracchi2002estimating,
  title={On estimating conditional quantiles and distribution functions},
  author={Peracchi, Franco},
  journal={Computational Statistics \& Data Analysis},
  volume={38},
  number={4},
  pages={433--447},
  year={2002},
  publisher={Elsevier}
}

@article{chernozhukov2010quantile,
  title={Quantile and probability curves without crossing},
  author={Chernozhukov, Victor and Fern{\'a}ndez-Val, Iv{\'a}n and Galichon, Alfred},
  journal={Econometrica},
  volume={78},
  number={3},
  pages={1093--1125},
  year={2010},
  publisher={Wiley Online Library}
}

@article{geraci2015improved,
  title={Improved transformation-based quantile regression},
  author={Geraci, Marco and Jones, MC},
  journal={Canadian Journal of Statistics},
  volume={43},
  number={1},
  pages={118--132},
  year={2015},
  publisher={Wiley Online Library}
}

@article{Mu2007,
   author = {Mu, Y. M. and He, X. M.},
   title = {Power transformation toward a linear regression quantile},
   journal = {Journal of the American Statistical Association},
   volume = {102},
   number = {477},
   pages = {269-279},
   year = {2007}
}

@book{koenker2005quantile,
  title={Quantile regression},
  author={Koenker, Roger},
  volume={38},
  year={2005},
  publisher={Cambridge university press}
}

@article{koenker2017handbook,
  title={Handbook of quantile regression},
  author={Koenker, Roger and Chernozhukov, Victor and He, Xuming and Peng, Limin},
  year={2017},
  publisher={CRC press}
}

@book{whittaker2009graphical,
  title={Graphical models in applied multivariate statistics},
  author={Whittaker, Joe},
  year={2009},
  publisher={Wiley Publishing}
}

@book{jordan1999learning,
  title={Learning in graphical models},
  author={Jordan, Michael Irwin},
  year={1999},
  publisher={MIT press}
}

@book{koller2009probabilistic,
  title={Probabilistic graphical models: principles and techniques},
  author={Koller, Daphne and Friedman, Nir},
  year={2009},
  publisher={MIT press}
}

@book{furno2018quantile,
  title={Quantile regression: estimation and simulation, Volume 2},
  author={Furno, Marilena and Vistocco, Domenico},
  volume={216},
  year={2018},
  publisher={John Wiley \& Sons}
}

@book{uribe2020quantile,
  title={Quantile regression for cross-sectional and time series data: Applications in energy markets using {R}},
  author={Uribe, Jorge M and Guillen, Montserrat},
  year={2020},
  publisher={Springer}
}

@article{merlo2023unified,
  title={Unified unconditional regression for multivariate quantiles, {M}-quantiles and expectiles},
  author={Merlo, Luca and Petrella, Lea and Salvati, Nicola and Tzavidis, Nikos},
  journal={Journal of the American Statistical Association},
  number={},
  pages={1--26},
  year={2023},
  publisher={Taylor \& Francis}
}

\end{document}